\theoremstyle{plain}
\newtheorem{theorem}{Theorem}[section]
\newtheorem{proposition}[theorem]{Proposition}
\newtheorem{lemma}[theorem]{Lemma}
\newtheorem{corollary}[theorem]{Corollary}
\theoremstyle{definition}
\theoremstyle{remark}
\newcolumntype{C}{>{$}c<{$}}
\newcommand{\ra}{\rightarrow}
\newcommand{\N}{\mathbb{N}}
\newcommand{\R}{\mathbb{R}}
\renewcommand{\d}{\text{d}}
\newcommand{\sph}{\mathbb{S}}
\newcommand{\B}{\mathcal{B}}
\newcommand{\Nc}{\mathcal{N}}
\newcommand{\U}{\mathcal{U}}
\DeclareMathOperator*{\argmax}{argmax}
\providecommand\ol[1]{}
\renewcommand\ol[1]{\overline{#1}}
\newcommand{\ind}{\mathds{1}}
\newcommand\ccint[2]{\,[#1,#2]}
\newcommand\ooint[2]{\,(#1,#2)}
\newcommand\coint[2]{\,[#1,#2)}
\icmltitlerunning{
	Gibbsian Polar Slice Sampling
}
\begin{document}
	
\twocolumn[
\icmltitle{
	Gibbsian Polar Slice Sampling
}

% It is OKAY to include author information, even for blind
% submissions: the style file will automatically remove it for you
% unless you've provided the [accepted] option to the icml2023
% package.

% List of affiliations: The first argument should be a (short)
% identifier you will use later to specify author affiliations
% Academic affiliations should list Department, University, City, Region, Country
% Industry affiliations should list Company, City, Region, Country

% You can specify symbols, otherwise they are numbered in order.
% Ideally, you should not use this facility. Affiliations will be numbered
% in order of appearance and this is the preferred way.
%\icmlsetsymbol{equal}{*}

\begin{icmlauthorlist}
	\icmlauthor{Philip Schär}{jena}
	\icmlauthor{Michael Habeck}{jena}
	\icmlauthor{Daniel Rudolf}{passau}
\end{icmlauthorlist}

\icmlaffiliation{jena}{Microscopic Image Analysis Group, Friedrich Schiller University Jena, Jena, Germany}
\icmlaffiliation{passau}{Faculty of Computer Science and Mathematics, University of Passau, Passau, Germany}

\icmlcorrespondingauthor{Daniel Rudolf}{daniel.rudolf@uni-passau.de}

% You may provide any keywords that you find helpful for describing your paper; these are used to populate the "keywords" metadata in the PDF but will not be shown in the document
\icmlkeywords{MCMC, slice sampling, polar slice sampling, heavy-tailed distributions, funneling}

\vskip 0.3in
]

\printAffiliationsAndNotice{} % leave blank if no need to mention equal contribution
%\printAffiliationsAndNotice{\icmlEqualContribution} % otherwise use the standard text.

\begin{abstract}
	Polar slice sampling \cite{PolarSS} is a Markov chain approach for approximate sampling of distributions that is difficult, if not impossible, to implement efficiently, but behaves provably well with respect to the dimension.
	By updating the directional and radial components of chain iterates separately, we obtain a family of samplers that mimic polar slice sampling, and yet can be implemented efficiently.
	%Numerical experiments indicate that our proposed algorithm outperforms the two most closely related approaches, elliptical slice sampling \cite{EllipticalSS} and hit-and-run uniform slice sampling \cite{MacKayBook}, in a variety of settings.
	Numerical experiments in a variety of settings indicate that our proposed algorithm outperforms the two most closely related approaches, elliptical slice sampling \cite{EllipticalSS} and hit-and-run uniform slice sampling \cite{MacKayBook}. We prove the well-definedness and convergence of our methods under suitable assumptions on the target distribution.
\end{abstract}

\section{Introduction} \label{Sec:Intro}

% Problem (black-box sampling on \R^d)
Bayesian inference heavily relies on efficient sampling schemes of posterior distributions that are defined on high-dimensional spaces with probability density functions that can only be evaluated up to their normalizing constants. We develop a Markov chain method that can be used to approximately sample a large variety of target distributions. 
For convenience we frame the problem in a black-box setting. We assume that the distribution of interest $\nu$ is defined on $(\R^d,\B(\R^d))$ and given by a density, i.e.~a measurable function $\varrho_{\nu}: \R^d \ra \R_+ := \coint{0}{\infty}$, such that
\begin{equation*}
	\nu(A)
	= \frac{\int_A \varrho_{\nu}(x) \d x}{\int_{\R^d} \varrho_{\nu}(x) \d x}, 
	\qquad A \in \B(\R^d).
\end{equation*}
In the following, knowledge of the normalization constant of $\varrho_{\nu}$, i.e.~the denominator of the above ratio, is not required. Since exact sampling is generally infeasible, we aim to produce approximate samples from $\nu$, i.e.~realizations of random variables whose distributions are, in some sense, close to $\nu$.

% Methodology: MCMC & slice sampling
To pursue this goal, we rely on Markov chain Monte Carlo (MCMC), which implements an irreducible and aperiodic Markov kernel $P$ that leaves $\nu$ invariant. For such a kernel, well-established theory shows that the distribution of iterates $X_n$ of a Markov chain $(X_n)_{n \in \N_0}$ with transition kernel $P$ converges to $\nu$ as $n \ra \infty$. An MCMC method generates a realization $(x_n)_{1\leq n \leq N} \subset \R^d$ of $X_1,\dots,X_N$ from $(X_n)_{n\in\N_0}$ and uses some or all of the realized chain iterates $x_n$ as approximate samples of $\nu$. Here, we focus on \textit{slice sampling} MCMC methods, that use auxiliary ``slice'' or threshold random variables $(T_n)_{n\in\N_0}$. In general, $T_n$ given $X_{n-1}$ follows a uniform distribution over an interval that depends on $X_{n-1}$.
%
% Methodology: PSS (in detail)
Specifically, we consider \textit{polar slice sampling} (PSS), which was proposed by \citet{PolarSS}. 
PSS factorizes the target density $\varrho_{\nu}$ into
\begin{align}
	\begin{split}
		&\varrho_{\nu}(x) = \varrho_{\nu}^{(0)}(x) \varrho_{\nu}^{(1)}(x), \\
		&\varrho_{\nu}^{(0)}(x) = \norm{x}^{1-d}, \\
		&\varrho_{\nu}^{(1)}(x) = \norm{x}^{d-1} \varrho_{\nu}(x)
	\end{split}
	\label{Eq:den_fac}
\end{align}
for $x \neq 0$, where $\norm{\cdot}$ is the Euclidean norm on $\R^d$. Given an initial value $x_0 \in \R^d$ with $x_0 \neq 0$ and $\varrho_{\nu}(x_0) > 0$, PSS recursively realizes the $n$-th chain iterate $x_n$ from the $(n-1)$-th iterate $x_{n-1}$ as follows: An auxiliary variable $t_n$ is chosen as a realization of\footnote{With $\U(G)$ we denote the uniform distribution on a set $G$ w.r.t.~some reference measure that is clear from the context, as it is always either the Lebesgue measure, the surface measure on the $(d-1)$-sphere or a product measure of the two.}
\begin{equation*}
	T_n \sim \U(\! \ooint{0}{\varrho_{\nu}^{(1)}(x_{n-1})}) .
\end{equation*}
Given $t_n$, the next chain iterate is generated using polar coordinates $X_n = R_n \Theta_n$ where the \textit{radius} $R_n$ is a random variable on $\R_+$ and the \textit{direction} $\Theta_n$ is a random variable on the $(d-1)$-sphere 
\begin{equation*}
	\sph^{d-1}:=\{ \theta\in \R^d \mid \norm{\theta} = 1\} \subset \R^d .
\end{equation*}
To conform with the general slice sampling principle of \citet{BesagGreen}, the variables $(R_n, \Theta_n)$ need to be sampled from the joint uniform distribution
\begin{equation}
	\U(\{(r,\theta) \in \R_+ \times \sph^{d-1} \mid \varrho_{\nu}^{(1)}(r \theta) > t_n\}) .
	\label{Eq:PSS_Xup}
\end{equation}
Standard slice sampling theory then guarantees that the resulting transition kernel has invariant distribution $\nu$. For the convenience of the reader, in Appendix~\ref{Sec:Deri} we elaborate on how PSS can be derived from the general slice sampling principle of \citet{BesagGreen}. In the following, we refer to the process of sampling $X_n$ given $T_n$ as \textit{$X$-update} of PSS.

% Motivation: PSS works in theory but not in practice
The theoretical analysis of PSS by \citet{PolarSS} offered performance guarantees for approximate sampling that are dimension-independent for rotationally invariant, log-concave target densities. Overall, their analysis suggests that PSS works generally robustly in high-dimensional scenarios. Despite this, PSS received little attention in the MCMC literature during the twenty years since its publication. We believe that this lack of engagement is not the result of PSS performing poorly on paper, but rather that of practical challenges in efficiently implementing it. Concurrent work by \citet{PSS_paper} supports this view. They prove -- again in the rotationally invariant, log-concave setting -- dimension-independent spectral gap estimates for PSS, which imply dimension-independence w.r.t.~the mean squared error and the asymptotic variance within the central limit theorem of the MCMC average of a summary function.

The practical challenge in implementing PSS is that the polar variables $(R_n, \Theta_n)$ need to be jointly drawn uniformly from a high-dimensional set that often has a complicated structure. Therefore, this step is usually implemented by an acceptance/rejection scheme using uniform samples from a tractable superset. In moderate to high dimensions, for target densities $\varrho_{\nu}$ that are not rotationally invariant, the fraction of directions $\theta \in \sph^{d-1}$ for which the set
\begin{equation*}
	\{r \in \R_+ \mid \varrho_{\nu}^{(1)}(r \theta) > t\}
\end{equation*}
is non-empty becomes tiny for most thresholds $t$ occurring as realizations of $T_n$. This usually leads to an impractically low acceptance rate of the aforementioned acceptance/rejection scheme, such that -- in expectation -- an astronomically large number of proposals needs to be drawn during a single transition. In other words, although a valid implementation of PSS is available in principle, the iterations of the sampler are computationally inefficient resulting in exceedingly long simulation runs.

% Purpose of the paper + Main results
To address this deficiency, we develop an MCMC framework that imitates PSS, but is guaranteed to run in a computationally efficient manner. Imitating here refers to keeping the PSS structure and splitting the difficult joint uniform sampling of radius and direction of \eqref{Eq:PSS_Xup} into separate steps. Intuition suggests that if the resulting transition mechanism is close to the original version of PSS, then also some of its desirable convergence properties will be inherited. The eventually proposed MCMC algorithm is essentially tuning-free and explores the state space remarkably quickly. We provide a basic theoretical underpinning of our method, proving that it asymptotically samples from the target distribution $\nu$ under mild regularity conditions. Moreover, we illustrate its potential to improve upon related methods through a series of numerical experiments.

% Outline
The remainder of this paper is structured as follows: In Section~\ref{Sec:GPSS} we propose our modifications of PSS that we term \textit{Gibbsian polar slice sampling}. To provide a better understanding, we present different variants that culminate in a Gibbsian polar slice sampler that incorporates a stepping-out and shrinkage procedure. In Section \ref{Sec:Theory} we provide theoretical support for our methods. We discuss possible extensions and alternative mechanisms that can also be used in our framework in Section~\ref{Sec:AltMech} and comment on related approaches in Section \ref{Sec:RelatedWork}. In Section \ref{Sec:Experiments} we present a number of numerical experiments comparing our algorithm to the two most closely related ones that are similarly feasible to use in practice. We conclude with a short discussion in Section \ref{Sec:Discussion}.

\section{Gibbsian Polar Slice Sampling} \label{Sec:GPSS}

The idea is to decompose the $X$-update of PSS by replacing the joint sampling of radius $r_n$ and direction $\theta_n$ with separate updates of both variables in a Gibbsian fashion. 

\subsection*{Variant~1}
Our initial modification of PSS is mostly of theoretical interest. First, we only update the directional component of the last chain iterate $x_{n-1}$, resulting in an intermediate state $r_{n-1} \theta_n$, where $r_{n-1} = \norm{x_{n-1}}$ and $\theta_n$ is a realization of
\begin{equation}
	\Theta_n \sim \U(\{\theta \in \sph^{d-1} \mid \varrho_{\nu}^{(1)}(r_{n-1} \theta) >t_n \}) .
	\label{Eq:GPSS_dir}
\end{equation}
We then update the radial component of the intermediate state, resulting in the new chain iterate $x_n := r_n \theta_n$, where $r_n$ is a realization of
\begin{equation}
	R_n \sim \U(\{r \in \R_+ \mid \varrho_{\nu}^{(1)}(r \theta_n) > t_n\}) .
	\label{Eq:GPSS_rad}
\end{equation}
In contrast to standard Gibbs sampling which cycles over coordinate-wise updates of the current state, we rely on a systematic conditional renewal in terms of the polar transformation components given as radius and direction. Although the modification does not solve the runtime issue, it lays the groundwork for further algorithmic improvements. 

At this stage, we would already like to mention Theorem~\ref{Thm:GPSS_inv} which states that $\nu$ is the invariant distribution of
the transition kernel of the 1st variant of \textit{Gibbsian polar slice sampling} (GPSS). Therefore, GPSS provides a correct MCMC method for targeting $\nu$.

We note that by randomizing the deterministic updating scheme (i.e. rather than always updating the direction first and then the radius, both are sampled in random order), one can obtain the stronger statement that the transition kernel is reversible w.r.t.~$\nu$.

\subsection*{Variant~2}
The direction update is still a challenging implementation issue, since it requires sampling of a uniform distribution over a $(d-1)$-dimensional set with $d$ possibly being large. Therefore, to sample the next direction, we suitably use (ideal) spherical slice sampling \cite{SphericalSS}, a recently developed MCMC method for (approximate) sampling from distributions on $\sph^{d-1}$. The radius update is not changed in this variant (since it consists of a $1$-dimensional uniform distribution sampling step). 

Let $\sph_{\theta}^{d-2}$ be the \textit{great subsphere w.r.t.~$\theta$}, i.e.~the set
\begin{equation*}
	\{ \vartheta \in \sph^{d-1} \mid \theta^T \vartheta = 0 \}
\end{equation*}
of directions in $\sph^{d-1}$ that are orthogonal to $\theta$. In \cite{SphericalSS} it is shown that one can sample uniformly from $\sph_{\theta}^{d-2}$ as follows: Draw $V_1 \sim \Nc_d(0,I_d)$, set $V_2 := V_1 - (\theta^T V_1) \theta$ and finally $V_3 := V_2 / \norm{V_2}$, then $V_3 \sim \U(\sph_{\theta}^{d-2})$. Furthermore, for any $y \in \sph_{\theta}^{d-2}$ the set
\begin{equation*}
	\{ \theta \cos(\omega) + y \sin(\omega) \mid \omega \in \coint{0}{2\pi} \}
\end{equation*}
is the unique great circle in $\sph^{d-1}$ that contains both $\theta$ and $y$.

A single iteration of this variant imitates the $X$-update of PSS as follows: First, a reference point $y$ is drawn uniformly from the great subsphere w.r.t.~the direction $\theta_{n-1}$ of the previous sample. The new direction $\theta_n$ is then sampled uniformly from the great circle of $\sph^{d-1}$ running through both $\theta_{n-1}$ and $y$, intersected with 
\begin{equation*}
	\{\theta \in \sph^{d-1}\mid \varrho_{\nu}^{(1)}(r_{n-1} \theta)>t_n\}.
\end{equation*}
Then, a new radius is chosen by sampling \eqref{Eq:GPSS_rad}.

\subsection*{Variant 3: The Concrete Algorithm}

In practice, the univariate direction and radius updates of the 2nd variant of GPSS still need to be implemented as acceptance/rejection schemes that might exhibit low acceptance rates. Therefore, the final variant of GPSS replaces both updates with adaptive procedures that are essentially guaranteed to be fast and result in an algorithm that empirically converges against the correct target distribution. 

In the radius update, we use the stepping-out and shrinkage procedure as proposed for uniform slice sampling by \citet{SSNeal}. In the direction update, we incorporate a shrinkage procedure. Actually, our direction update can be interpreted as running the shrinkage-based spherical slice sampler \cite{SphericalSS}. This ultimate variant is readily implemented by Algorithms~\ref{Alg:GPSS}, \ref{Alg:OS} and \ref{Alg:RS}.

Briefly a single iteration works as follows: An element $y$ of the great subsphere is determined as in variant 2 and then the new direction $\theta_n$ is sampled via a shrinkage procedure on the great circle of $\sph^{d-1}$ running through both $\theta_{n-1}$ and $y$. Finally, a new radius is chosen via a stepping-out and shrinkage procedure on the ray emanating from the origin in direction $\theta_n$, where shrinkage can be performed around the old radius $r_{n-1}$ because it satisfies the target condition by construction.

\begin{algorithm}[tb]
	\caption{Gibbsian Polar Slice Sampling}
	\label{Alg:GPSS}
	\begin{algorithmic}[1]
		\STATE {\bfseries Input:} target density $\varrho_{\nu}$, initial value $x_0 \in \R^d$ with $x_0 \neq 0$ and $\varrho_{\nu}(x_0) > 0$, initial interval length $w > 0$
		% short alternative: $\varrho_{\nu}: \R^d \ra \R$, $x_0 \in \R^d$, $w > 0$
		\STATE Define $\varrho_{\nu}^{(1)}: x \mapsto \norm{x}^{d-1} \varrho_{\nu}(x)$
		\STATE Set $r_0 := \norm{x_0}$ and $\theta_0 := x_0 / r_0$
		\FOR{$n=1,2,\ldots$}
		\STATE Draw $T_n \sim \U(\!\ooint{0}{\varrho_{\nu}^{(1)}(x_{n-1})})$, call result $t_n$
		\STATE $\theta_n := \text{Geodesic\_Shrinkage}(\varrho_{\nu}^{(1)}, t_n, r_{n-1}, \theta_{n-1})$
		\STATE $r_n := \text{Radius\_Shrinkage}(\varrho_{\nu}^{(1)}, t_n, r_{n-1}, \theta_n, w)$
		\STATE $x_n := r_n \theta_n$
		\ENDFOR
		\STATE {\bfseries return} $(x_n)_{n \geq 0}$
	\end{algorithmic}
\end{algorithm}

\begin{algorithm}[tb]
	\caption{Geodesic Shrinkage}
	\label{Alg:OS}
	\begin{algorithmic}[1]
		\STATE {\bfseries Input:} transform $\varrho_{\nu}^{(1)}$ of target density, current threshold $t_n$, current radius $r_{n-1}$, current direction $\theta_{n-1}$
		\STATE Draw $Y \sim \U(\sph^{d-1}_{\theta_{n-1}})$, call the result $y$
		\STATE Draw $\omega_{\max} \sim \U([0,2\pi])$, set $\omega_{\min} := \omega_{\max} - 2 \pi$
		\REPEAT
		\STATE Draw $\Omega \sim \U(\!\ccint{\omega_{\min}}{\omega_{\max}})$, call result $\omega$
		\STATE Set $\theta_n := \theta_{n-1} \cos \omega + y \sin \omega$
		\STATE {\bfseries if} $\omega < 0$ {\bfseries then} $\omega_{\min} := \omega$ {\bfseries else} $\omega_{\max} := \omega$
		\UNTIL{$\varrho_{\nu}^{(1)}(r_{n-1} \theta_n) > t_n$}
		\STATE {\bfseries return} $\theta_n$
	\end{algorithmic}
\end{algorithm}

\begin{algorithm}[tb]
	\caption{Radius Shrinkage}
	\label{Alg:RS}
	\begin{algorithmic}[1]
		\STATE {\bfseries Input:} transform $\varrho_{\nu}^{(1)}$ of target density, current threshold $t_n$, current radius $r_{n-1}$, current direction $\theta_n$, initial interval length $w > 0$
		\STATE Sample $U \sim \U([0,1])$, call the result $u$
		\STATE $r_{\min} := \max(r_{n-1} - u \cdot w, 0)$
		\STATE $r_{\max} := r_{n-1} + (1-u) \cdot w$
		\WHILE{$r_{\min} > 0$ {\bfseries and} $\varrho_{\nu}^{(1)}(r_{\min} \theta_n) > t_n$}
		\STATE $r_{\min} := \max(r_{\min} - w, 0)$
		\ENDWHILE 
		\WHILE{$\varrho_{\nu}^{(1)}(r_{\max} \theta_n) > t_n$}
		\STATE $r_{\max} := r_{\max} + w$
		\ENDWHILE 
		\STATE Sample $R_n \sim \U([r_{\min},r_{\max}])$, call the result $r_n$ 
		\WHILE{$\varrho_{\nu}^{(1)}(r_n \theta_n) \leq t_n$}
		\STATE {\bfseries if} $r_n < r_{n-1}$ {\bfseries then} $r_{\min} := r_n$ {\bfseries else} $r_{\max} := r_n$
		\STATE Sample $R_n \sim \U([r_{\min},r_{\max}])$, call the result $r_n$
		\ENDWHILE
		\STATE {\bfseries return} $r_n$
	\end{algorithmic}
\end{algorithm}

\section{Validation -- Theoretical Support} \label{Sec:Theory}

We provide some basic theoretical underpinning of the proposed methods with a focus on the 2nd variant of GPSS. The reasons for this are threefold:
First, in principle this variant can also be implemented by using univariate acceptance/rejection schemes\footnote{For the direction update this is immediately possible, since $[0,2\pi]$ is a superset of the corresponding acceptance region. 
	For the radius update this is more complicated and may require some structural knowledge about $\varrho_{\nu}$.}. Second, we want to avoid any deep discussion about the stepping-out procedure that is involved in Algorithm~\ref{Alg:RS}. Third, since Algorithms~\ref{Alg:OS} and \ref{Alg:RS} both contain a shrinkage procedure, no explicit representation of the corresponding transition kernels is available to our knowledge. The proofs of the following statements can be found in Appendix \ref{Sec:Proofs}.

\begin{theorem} \label{Thm:GPSS_inv}
	The transition kernels corresponding to the 1st and 2nd variant of GPSS admit $\nu$ as invariant distribution.
\end{theorem}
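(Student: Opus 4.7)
The plan is to recognise Variant 1 of GPSS as a three-stage systematic-scan Gibbs sampler on a suitably augmented state space, and then to deduce the invariance for Variant 2 by observing that only one of these Gibbs steps is replaced, and that the replacement kernel is itself invariant for the conditional it is supposed to sample.

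To this end, I would introduce on $\R_+\times\sph^{d-1}\times\R_+$ the augmented distribution $\mu$ with density proportional to
\[ \ind\{\,0 < t < \varrho_\nu^{(1)}(r\theta)\,\} \]
with respect to $\d r\,\d\sigma(\theta)\,\d t$, where $\sigma$ denotes the surface measure on $\sph^{d-1}$. A direct calculation then shows that the $(R,\Theta)$-marginal of $\mu$, pushed forward by $(r,\theta)\mapsto r\theta$, equals $\nu$: integrating out $t$ yields a density proportional to $\varrho_\nu^{(1)}(r\theta)$ w.r.t.\ $\d r\,\d\sigma(\theta)$, and the factor $\|x\|^{d-1}$ appearing in the polar-coordinate Jacobian on $\R^d$ cancels exactly with $\varrho_\nu^{(0)}(x) = \|x\|^{1-d}$, so that the resulting density on $\R^d$ is $\varrho_\nu$.

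From $\mu$ I would then read off the three full conditionals: $T\mid R,\Theta$ is uniform on $(0,\varrho_\nu^{(1)}(r\theta))$, $\Theta\mid R,T$ is uniform on $\{\theta\in\sph^{d-1}\mid\varrho_\nu^{(1)}(r\theta)>t\}$, and $R\mid\Theta,T$ is uniform on $\{r\in\R_+\mid\varrho_\nu^{(1)}(r\theta)>t\}$. These are precisely the draws performed in Variant 1: the sampling of $T_n$ realises the $T$-update (note that $\varrho_\nu^{(1)}(r_{n-1}\theta_{n-1})=\varrho_\nu^{(1)}(x_{n-1})$), while \eqref{Eq:GPSS_dir} and \eqref{Eq:GPSS_rad} realise the $\Theta$- and $R$-updates. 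Since each systematic-scan Gibbs step leaves $\mu$ invariant, so does their composition; projecting onto the $(R,\Theta)$ coordinates and mapping back via $X=R\Theta$ yields the claimed $\nu$-invariance of Variant 1.

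For Variant 2 only the $\Theta$-update changes: it is replaced by a single step of the ideal spherical slice sampler on $\sph^{d-1}$ targeting the uniform distribution on $\{\theta\in\sph^{d-1}\mid\varrho_\nu^{(1)}(r_{n-1}\theta)>t_n\}$. The invariance result proved in \cite{SphericalSS} guarantees that this substitute kernel leaves this uniform distribution, i.e.\ the conditional $\Theta\mid R,T$ under $\mu$, invariant. Hence substituting it into the Gibbs cycle preserves $\mu$-invariance, and the same projection argument produces $\nu$-invariance for Variant 2. The main obstacle is not computation but careful bookkeeping with the polar change of variables, specifically verifying that $\varrho_\nu^{(0)}$ is chosen exactly so that the reference measure for $\mu$ is the plain product $\d r\,\d\sigma(\theta)\,\d t$ and not a Jacobian-weighted variant; once this identification is in place, the Gibbs argument and the single invocation of \cite{SphericalSS} are standard.
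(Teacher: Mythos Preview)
Your proposal is correct and yields a valid proof, but it takes a somewhat different route from the paper's.

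The paper works on $\R^d$ throughout: it first shows a general lemma to the effect that if, for each threshold $t$, the $X$-update kernel $U_X^{(t)}$ leaves the distribution $\mu_t(\d x) \propto \norm{x}^{1-d}\ind_{L(t)}(x)\,\d x$ invariant, then the full GPSS kernel leaves $\nu$ invariant. It then establishes, by explicit computation in polar coordinates, that for both variants the direction and radius update kernels $U_D^{(t)}$ and $U_R^{(t)}$ are each \emph{reversible} with respect to $\mu_t$; invariance of the product $U_X^{(t)} = U_D^{(t)} U_R^{(t)}$ follows.

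You instead move to polar coordinates at the outset, introduce the joint law on $(R,\Theta,T)$, and recognise Variant~1 as a three-block systematic-scan Gibbs sampler. This sidesteps the explicit kernel calculations entirely: the direction and radius updates are full-conditional draws by construction, so their invariance is immediate; the spherical-slice-sampler substitution for Variant~2 is then a standard Metropolis-within-Gibbs step. Your route is more conceptual and arguably cleaner for the stated theorem. The paper's route buys something extra, namely componentwise reversibility with respect to $\mu_t$, which (upon randomising the update order) upgrades invariance to reversibility of the full kernel with respect to $\nu$---a point the paper remarks on but the theorem itself does not require. Both arguments ultimately rest on the same observation: the choice $\varrho_\nu^{(0)}(x)=\norm{x}^{1-d}$ exactly cancels the polar Jacobian, so that the relevant reference measure is the unweighted product $\d r\,\sigma_d(\d\theta)$.
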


To verify that $\nu$ is an invariant distribution, we argue that the direction and radius update are both individually reversible w.r.t.~$\nu$. This could also serve as a strategy to prove the invariance of the 3rd variant of GPSS. For the direction update (Algorithm~\ref{Alg:OS}), this can be done by virtue of results of \cite{SphericalSS,RevESS}. For the radius update, the interplay of stepping-out and shrinkage makes it difficult to prove reversibility rigorously. However, intuitively the arguments of \citet{SSNeal} apply.

Under weak regularity assumptions on $\varrho_{\nu}$ we also get a convergence statement.

\begin{theorem} \label{Thm:GPSS_variant2_conv}
	Assume that the target density $\varrho_{\nu}$ is strictly positive, i.e.~$\varrho_{\nu} \colon \R^d \to \ooint{0}{\infty}$. Then, for $\nu$-almost every initial value $x_0$, the distribution of an iterate $X_n$ of a Markov chain $(X_n)_{n \in \N_0}$ with $X_0 = x_0$ and transition kernel corresponding to the 2nd variant of GPSS converges to $\nu$ in total variation as $n \to \infty$.
\end{theorem}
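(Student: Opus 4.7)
The plan is to invoke a standard convergence theorem for $\nu$-invariant Markov kernels (e.g.\ the Roberts--Rosenthal result: a $\nu$-invariant, $\nu$-irreducible, aperiodic kernel converges in total variation to $\nu$ from $\nu$-almost every starting point). Theorem~\ref{Thm:GPSS_inv} already supplies $\nu$-invariance of the variant-2 kernel $P$, so the task reduces to establishing $\nu$-irreducibility and aperiodicity under strict positivity of $\varrho_\nu$.

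For $\nu$-irreducibility, I would disintegrate $P(x,\cdot)$ along its three sources of randomness: the threshold $T_n \sim \U(\!\ooint{0}{\varrho_\nu^{(1)}(x)})$, the reference direction $Y \sim \U(\sph^{d-2}_{\theta_{n-1}})$ used in the spherical slice sampling direction update together with the uniform choice of the great-circle angle $\omega$, and finally the uniform radius $R_n$ on its slice. Strict positivity of $\varrho_\nu$ guarantees that for an open set of thresholds $t_n$, both the directional slice $\{\theta \in \sph^{d-1} : \varrho_\nu^{(1)}(r_{n-1}\theta) > t_n\}$ and the radial slice $\{r > 0 : \varrho_\nu^{(1)}(r\theta_n) > t_n\}$ are non-degenerate open sets. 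Assembling the three conditional densities (and accounting for the Jacobian of the polar parameterisation $x_n = r_n\theta_n$), one obtains that $P(x,\cdot)$ has a Lebesgue density on $\R^d \setminus\{0\}$ that is strictly positive on some open neighbourhood $V_x$ of $x$. Since $\varrho_\nu > 0$ implies $\nu$ and Lebesgue measure are mutually absolutely continuous, a standard path-chaining argument (covering any two points in $\R^d\setminus\{0\}$ by a finite chain of overlapping $V_{x_i}$) then yields $P^n(x,A) > 0$ for every measurable $A$ with $\nu(A) > 0$ and some $n \in \N$, which is $\nu$-irreducibility.

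Aperiodicity is essentially a by-product of the same positivity: since $P(x, V_x) > 0$ for every sufficiently small open neighbourhood $V_x$ of $x$, the chain returns to any small set at consecutive time steps with positive probability, so no period greater than one is possible. Combining $\nu$-invariance, $\nu$-irreducibility, and aperiodicity with the cited convergence theorem gives the claim for $\nu$-almost every $x_0$.

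The main obstacle is the careful justification that the direction-update sub-kernel corresponding to the ideal spherical slice sampling step of \cite{SphericalSS} admits a transition density that is strictly positive on an open subset of $\sph^{d-1}$ around $\theta_{n-1}$. One must disentangle the joint law of $(Y,\omega)$, handle the great-circle parameterisation $\theta_n = \theta_{n-1}\cos\omega + y\sin\omega$ (including its degeneracies at $\omega \in \pi\Z$ and the measure-theoretic subtlety that the supporting great circle is $y$-dependent), and verify that after marginalising over $Y$ the resulting conditional density of $\Theta_n$ on the sphere is bounded below on some spherical cap around $\theta_{n-1}$. Once this directional positivity statement is in hand, the radial step and the Jacobian of the polar change of variables are routine, and the argument proceeds as sketched.
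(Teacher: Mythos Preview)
Your high-level strategy (invariance from Theorem~\ref{Thm:GPSS_inv}, then $\nu$-irreducibility and aperiodicity, then a Tierney-type convergence theorem) is exactly the paper's. The paper, however, avoids your local-neighbourhood-plus-chaining argument entirely and instead establishes the much stronger one-step statement $Q(x,A)>0$ for every $A$ with $\lambda_d(A)>0$. The key input you are missing is that the ideal geodesic slice sampler from \cite{SphericalSS} satisfies a \emph{global} minorisation $S^{(r,t)}(\theta,D)\geq \varepsilon\,\sigma_d(D\cap L(t,r))$ with a constant $\varepsilon>0$ independent of $(r,t,\theta)$. Feeding this through the radius update and the polar change of variables yields an explicit a.e.\ positive Lebesgue density lower bound for $Q(x,\cdot)$ on all of $\R^d$; one extra lemma (that $\lambda_1(L(t,y/\|y\|))<\infty$ for $\lambda_1\otimes\lambda_d$-a.e.\ $(t,y)$, derived from integrability of $\varrho_\nu$) takes care of the radial normaliser. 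This gives irreducibility and aperiodicity in a single step.

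Your route, by contrast, has two genuine soft spots. First, you assert that strict positivity of $\varrho_\nu$ makes the directional and radial slices ``non-degenerate open sets'' and that $P(x,\cdot)$ has positive density on an \emph{open} neighbourhood $V_x$ of $x$. No continuity of $\varrho_\nu$ is assumed, so the super-level sets of $\varrho_\nu^{(1)}$ need not be open, and there is no reason the transition density should be positive on a topological neighbourhood; the paper's argument is purely measure-theoretic and never needs this. Second, the ``standard path-chaining argument'' from local positivity to global irreducibility is more delicate than you indicate: you would need some uniform lower bound on the radii of the $V_x$ along the chain, which is not obviously available here. Both issues evaporate once you use the global minorisation of the spherical slice kernel, which is precisely the ``main obstacle'' you flag---but it is already resolved in \cite{SphericalSS} in a form far stronger than the spherical-cap positivity you were aiming for.
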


Under an additional restrictive structural requirement the convergence result also holds for the 3rd variant of GPSS. Namely, if we assume that the radius shrinkage of Algorithm~\ref{Alg:RS} with threshold $t_n$, current radius $r_{n-1}$ and direction $\theta_n$ realizes sampling w.r.t.
\begin{equation*}
	\U(\{r \in \R_+ \mid \varrho_{\nu}^{(1)}(r \theta_n) > t_n\}).
\end{equation*}
For example, this is true if $\varrho_{\nu}^{(1)}$ is unimodal along rays. This actually is a scenario where the direction update relies on the shrinkage-based Algorithm~\ref{Alg:OS}, but the radius update is exact as in the 2nd variant of GPSS. 

% Ggf. schon was zu Polar?
% Of course it would be desirable to have more precise results about the quality of convergence, as e.g. geometric ergodicity statements as have been developed in \cite{EllSSConv} or even quantitative results as e.g. in \cite{SlavaUSS}. 

\section{Alternative Transition Mechanisms} \label{Sec:AltMech}

We emphasize that our 3rd variant of GPSS is just one of many possible ways to create a valid and efficiently implementable method that is based on the 2nd variant. For example, many of the ideas in Section 2.4 of \cite{EllipticalSS} for modifying the shrinkage procedure used by elliptical slice sampling easily transfer to the shrinkage procedure used in the direction update of GPSS.
For the radius update, one could drop stepping-out, i.e.~just place an interval of size $w > 0$ randomly around the current $r_n$ and then use the shrinkage procedure right away. With the same choice of the hyperparameter $w$ this reduces the number of target density evaluations per iteration at the cost of also reducing the speed at which the sampler explores the distribution of interest. % NOTE caveat: with small w, this would completely destroy our method's performance for heavy-tailed distributions; with large w, we would usually need many target density evaluations again
One could also replace our radius update by the update mechanism of latent slice sampling \cite{LatentSS}. In principle, one could even use Metropolis-Hastings transitions for either one of the updates.
As long as each transition mechanism leaves the corresponding distribution in either \eqref{Eq:GPSS_dir} or \eqref{Eq:GPSS_rad} invariant, the resulting sampler should be well-behaved. In this sense, GPSS provides a flexible framework that leads to a variety of different samplers.

\section{Related Work} \label{Sec:RelatedWork}

The radius update we use for GPSS in the 3rd variant has previously been considered in Section 4.2.2 of \cite{ThompsonPhD}, where it was suggested to alternate it with a standard update on $\R^d$, i.e.~one that does not keep the radius fixed. Intuitively, our GPSS improves upon this approach by introducing a dedicated direction update, which, by not wasting effort on exploring the entire sample space, can more efficiently determine a (hopefully) good new direction.

Furthermore, although GPSS was developed as imitating classical polar slice sampling, it also bears some resemblance to two other slice sampling methods, namely \textit{elliptical slice sampling} (ESS) \cite{EllipticalSS} and \textit{hit-and-run uniform slice sampling}\footnote{The idea of HRUSS is already formulated in the paragraph `How slice sampling is used in real problems' in Section 29.7 in \cite{MacKayBook}} (HRUSS). Both methods have been analyzed theoretically to some extent, ESS in \cite{EllSSConv,RevESS}, HRUSS in \cite{Latuszynski} and an idealized version of HRUSS in \cite{Ullrich}. Moreover, a sophisticated sampling algorithm that employs a large number of ESS-based Markov chains running in parallel has been suggested in \cite{GenEllSS}.

Both ESS and HRUSS follow the same basic principle as all other slice sampling approaches: In iteration $n$, they draw a threshold $t_n$ w.r.t.~the value of some fixed function $\varrho_{\nu}^{(1)}$ at the latest sample $x_{n-1}$. They then determine the next sample $x_n$ by approximately sampling from some distribution restricted to the \textit{slice} (or level set) of $\varrho_{\nu}^{(1)}$ at threshold $t_n$. 

For a given threshold, ESS draws an auxiliary variable from a mean zero multivariate Gaussian and performs shrinkage on the zero-centered ellipse running through both the auxiliary point and the latest sample, using the latter as the reference point for shrinkage. This is very similar to the direction update in the 3rd variant of GPSS\footnote{In fact, both approaches can even be implemented to use the same random variables, $d$ samples from $\Nc(0,1)$, for determining the one-dimensional object to perform shrinkage on.}, where we draw the auxiliary variable $y$ from the uniform distribution on the great subsphere w.r.t.~the direction $\theta_{n-1}$ of the latest sample $x_{n-1}$ and then perform shrinkage on the unique great circle of $\sph^{d-1}$ that contains both $y$ and $\theta_{n-1}$.

HRUSS on the other hand uses neither ellipses nor circles. For a given threshold, it proceeds by choosing a random direction (uniformly from $\sph^{d-1}$) and determining the next sample $x_n$ by performing stepping-out and shrinkage procedures on the line through the latest sample $x_{n-1}$ in this direction. This is obviously very similar to the radius update used in the 3rd GPSS variant. The two major differences are that the latter does not use an auxiliary variable to determine the direction along which it performs the update, and that HRUSS performs an update on an entire line, whereas GPSS only considers a ray (by requiring radii to be non-negative).

Based on these comparisons, we may expect an iteration of our 3rd GPSS variant to be roughly as costly as one of ESS and one of HRUSS combined. Various experiments suggest this to be a good rule of thumb, though GPSS actually tends to be faster than ESS if the latter is not well-tuned (we explain what this means in Section \ref{Sec:Experiments}). Although HRUSS is clearly the fastest among the three, it usually takes large amounts of very small steps, so that it tends to lag behind the other two samplers when considering metrics like effective sample size per time unit.

\section{Experiments} \label{Sec:Experiments}

We illustrate the strengths of our 3rd variant of GPSS by a series of numerical experiments in which we compare its performance with those of ESS and HRUSS\footnote{All of our experiments were conducted on a workstation equipped with an AMD Ryzen 5 PRO 4650G CPU.}. Source code allowing the reproduction (in nature) of our experimental results is provided in a github repository\footnote{\href{https://github.com/microscopic-image-analysis/Gibbsian_Polar_Slice_Sampling}{https://github.com/microscopic-image-analysis/Gibbsian\_Polar\_Slice\_Sampling}}. Some remarks on the pitfalls in implementing our method are provided in Appendix \ref{Sec:ImpNotes}, additional sampling statistics in Appendix \ref{Sec:SamStats}, an additional experiment on Bayesian logistic regression in Appendix \ref{Sec:LogReg}, and further illustrative plots in Appendix \ref{Sec:ExpRes}.

Before we discuss the individual experiments, some explanation of our general approach to using and comparing these methods is in order. The positive hyperparameter $w$ in HRUSS determines the initial size of the search interval in the stepping-out procedure, playing the same role as the eponymous parameter of GPSS in Algorithm~\ref{Alg:RS}. Therefore, when we compare the two samplers in some fixed setting, we always use the same value of $w$ for both of them. We note, however, that neither parameter has much of an influence on the sampler's performance as long as they are not chosen orders of magnitude too small. Accordingly, we did not carefully tune them in any of the experiments.

As ESS is technically only intended for posterior densities with mean zero Gaussian priors, whenever we are in a different setting, we artificially introduce a mean zero Gaussian prior and use the target divided by the prior as likelihood. As we will see in the following, the performance of the method can be quite sensitive to the choice of the covariance matrix for this artificial prior, so we sometimes consider both a ``naive'' (or ``untuned'') and a ``sophisticated'' (or ``tuned'') choice in our comparisons. 

To make the sophisticated choices, we typically rely on the fact that the radii of samples from $\Nc_d(0, I_d)$ scatter roughly around $\sqrt{d}$, but also on an understanding of the desired range of radii gained either through theoretical analysis of the target density or by examining samples generated by GPSS. In one of our experiments, where the variables are highly correlated in the sense that their joint density -- our target density -- is very far from rotational invariance, we even use the empirical covariance of samples generated by GPSS as the covariance matrix for ESS.

% We emphasize that all these ways of choosing the covariance matrix would not be available in a black box setting when only using ESS, especially because (as we will see in the remainder of the section) ESS with an ill-chosen covariance is prone to move slowly, so that short preliminary runs with a naively chosen covariance would usually not give an accurate reading on the desired range of radii. In other words, if one is faced with a target density that does not contain a Gaussian factor, a reasonable application of ESS likely involves tuning.

\subsection{Multivariate Standard Cauchy Distribution} \label{SubSec:std_Cauchy}

First we considered the multivariate generalization of the pathologically heavy-tailed standard Cauchy distribution, i.e.~the distribution $\nu$ with $\varrho_{\nu}(x) = (1 + \norm{x}^2)^{-(d+1)/2}$. We chose the sample space dimension to be $d=100$, initialized all samplers with $x_0 := (1,\ldots,1)^T$ and ran each of them for $N = 10^6$ iterations. Since $\nu$ is rotationally invariant and the covariance of naive ESS was already on a reasonable scale, we refrained from using a tuned version of ESS here.

As the canonical test statistics mean and covariance are undefined for $\nu$, we instead measured the sample quality by letting the samplers estimate a probability w.r.t.~$\nu$. For this we chose the probability of the event that $\norm{Z} > b$ and simultaneously $Z_1 > 0$, where $Z = (Z_1,\dots,Z_d)^T \sim \nu$ and $b > 0$. Described in formulas, the samplers estimated
\begin{equation}
	p(b) := \nu(\{ z \in \R^d \mid \norm{z} > b, \; z_1 > 0 \}) ,
	%\int_{\R^d} \ind_{(b,\infty)}(\norm{x}) \ind_{(0,\infty)}(x^{(1)}) \nu(\d x),
	\label{Eq:p(b)}
\end{equation}
where $z_1$ is the first entry of $z = (z_1,\dots,z_d)^T \in \R^d$.
Note that the condition $\norm{Z} > b$ measures how well the sample radii reflect those that would occur in exact sampling from $\nu$, whereas the condition $Z_1 > 0$ detects if the sample directions (i.e.~the samples divided by their radii), are biased towards either one of the two half-spaces separated by the hyperplane through all but the first coordinate axes. Hence both radii and directions need to be sampled well in order for the estimate of $p(b)$ to quickly approach the true value.

\begin{figure}[tb]
	\begin{center}
		\includegraphics[width=0.5\textwidth]{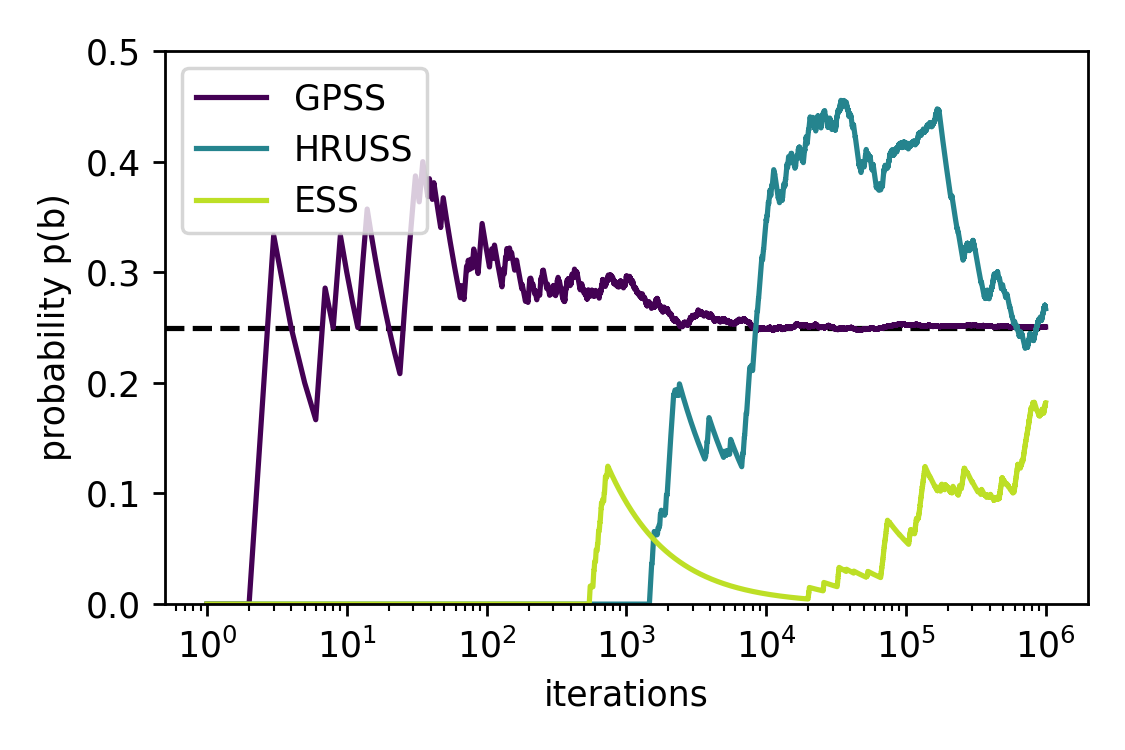}
	\end{center}
	\caption{Progression over iterations of the estimates of $p(b)$ from \eqref{Eq:p(b)} in the multivariate standard Cauchy experiment. Note the logarithmic scale. The dashed black line denotes the ground truth, which, using our knowledge of the target's rotational invariance, we could compute by numerically solving a one-dimensional integral. \label{Fig:std_Cauchy_conv}}
\end{figure}

The progressions of the samplers' estimates of $p(b)$ are shown in Figure \ref{Fig:std_Cauchy_conv}. It can be seen that those produced by GPSS converge to the true value orders of magnitude faster than those by both HRUSS and ESS. In Appendix \ref{Sec:ExpRes}, Figure \ref{Fig:std_Cauchy}, we provide a peek into the sampling behind these results by displaying the progression of radii and log radii over $N_{\text{window}} = 5 \cdot 10^4$ iterations. As exact sampling from $\nu$ is tractable (using samples from the multivariate normal and the $\chi^2$-distribution), we also display traces of exact i.i.d.~samples for comparison.

Figure \ref{Fig:std_Cauchy} suggests that the samples produced by GPSS are comparable in quality to i.i.d.~ones, and shows that those by HRUSS and ESS are certainly not. Upon closer examination, the reason for this becomes evident: Although all samplers take trips to the distribution's tails, those taken by HRUSS and ESS last several orders of magnitude longer than those of GPSS into the same distance. Consequently, HRUSS and ESS make their excursions to the extremely far-off parts of the tails (say, the points of radii in the thousands or more) with vanishingly small frequency. Thus GPSS needs a much shorter chain to properly reflect the tails of the target distribution. We acknowledge, however, that the advantage GPSS has over HRUSS in this setting would be considerably smaller if the target density was not centered around the origin.

\subsection{Hyperplane Disk} \label{SubSec:hyperplane}

Next we considered the target density
\begin{equation}
	\varrho_{\nu}(x)
	= \exp(- (\textstyle \sum_{i=1}^d x_i)^2 - \norm{x}^2)
	\label{Eq:hyperplane}
\end{equation}
for $x = (x_1,\ldots,x_d)^T \in \R^d$. Intuitively, the sum term within the density leads to a concentration of its probability mass around a hyperplane, i.e.~a $(d-1)$-dimensional subspace, given by the set of points $x \in \R^d$ for which the sum term vanishes. The norm term ensures that the function is integrable with Gaussian tails in all directions, which intuitively further concentrates the distribution around a circular disk within the hyperplane. We set $d = 200$ and initialized all chains in an area of high probability mass. We then ran each sampler for $N = 10^4$ iterations. To tune ESS, we took all $N$ samples generated by GPSS, computed their empirical covariance matrix and used it as the covariance parameter for the artificial Gaussian prior of ESS.

The results are shown in Appendix \ref{Sec:ExpRes}, Figure \ref{Fig:hyperplane}. The progression of the sample radii suggests that GPSS has a considerable advantage over all other approaches. However, impressions based on sample radii should be taken with some caution, since GPSS is the only method that specifically updates the radii during sampling. Accordingly, when considering empirical step sizes, the advantage of GPSS is less pronounced, but still present. For example, the mean step sizes are $\approx 5.0$ for GPSS, $\approx 3.5$ for tuned ESS, $\approx 2.4$ for untuned ESS and $\approx 0.6$ for HRUSS.

There are two drawbacks regarding the performance of tuned ESS, which by the aforementioned aspects is the closest competitor to our method in this setting. On the one hand, as noted before, we tuned ESS using the samples generated by GPSS, thus relying on the robust performance of GPSS to compute a good estimate of the target distribution's true covariance matrix. On the other hand, the computational overhead of sampling from a multivariate Gaussian with non-diagonal covariance matrix slows tuned ESS down significantly. As a result, tuned ESS consistently ran
% hundreds of times
slower than all the other samplers in this experiment.
The advantage of GPSS over tuned ESS is remarkable, not just because only the latter uses a proposal distribution adjusted to the shape of the target distribution, but also because the tails of the target are Gaussian, which in principle should benefit ESS (by virtue of its proposal distribution being Gaussian as well).

We attribute the relatively poor performance of HRUSS to the curse of dimensionality: As a result of the target density being narrowly concentrated around a hyperplane of a very high-dimensional space, from any given point in the target's high probability region there are very few directions in which one can take large steps without leaving the high-probability region. Nevertheless, HRUSS isotropically samples the direction along which it will move, such that most of the time only small steps along the sampled direction are allowed.

\subsection{Axial Modes} \label{SubSec:axial_modes}

Our third experiment was concerned with the target density
\begin{equation}
	\varrho_{\nu}(x)
	= \norm{x}_{\infty}^4 \exp(-\norm{x}_1) .
	\label{Eq:axial_modes_target}
\end{equation}
Due to the counteracting forces of $\infty$-norm and $1$-norm, $\varrho_{\nu}$ possesses $2d$ fairly isolated modes, $2$ along each coordinate axis, see Figure \ref{Fig:axial_modes_target} (appendix) for an illustration.

This particular structure enables an interesting quantitative diagnostic for the performance of MCMC methods targeting this distribution: For any given $x = (x_1,\ldots,x_d)^T \in \R^d$, one can use the absolute values of its components to assign it to a pair of modes (that lie on the same coordinate axis) via
\begin{equation*}
	\text{axis}(x)
	:= \argmax_{1 \leq i \leq d} \abs{x_i} .
\end{equation*}
For a finite chain of samples $(x^{(n)})_{n=1,\ldots,N} \subset \R^d$ that was generated as the output of some MCMC method, numerous quantitative diagnostics can then be applied to the values $(\text{axis}(x^{(n)}))_{n=1,\ldots,N}$. For example, one can say that the chain \textit{jumped between modes} in step $i$ if and only if
\begin{equation*}
	\text{axis}(x^{(i)}) \neq \text{axis}(x^{(i-1)}).
\end{equation*}
One can then count the \textit{total number of jumps} within the $N$ iterations (which provides information about how quickly the chain moved back and forth between the mode pairs) and compare these values between different chains. Alternatively, one could compute the \textit{mean dwelling time}, i.e.~the average number of iterations the chain spent at a mode pair until jumping to the next. This may be a more helpful quantity than the total number of jumps, because it is essentially independent of the number $N$ of iterations. It may also be worthwhile to consider the \textit{maximum dwelling time}, i.e.~the largest number of iterations the chain spent at a mode pair without leaving, as this is more suitable than mean dwelling time and total number of jumps for detecting whether a chain occasionally gets stuck at a mode pair for excessively many iterations.

We ran each sampler for $N = 10^5$ iterations in each of the dimensions $d=10,20,\ldots,100$. As initialization we used $x_0 := (5,1,\ldots,1)^T \in \R^d$. For the ESS covariance we considered both the usual naive choice $\Sigma = I_d$ and the carefully hand-tuned choice $\Sigma = (5+d/10)^2 / d \cdot I_d$.

\begin{figure}[tb]
	\begin{center}
		\includegraphics[width=0.5\textwidth]{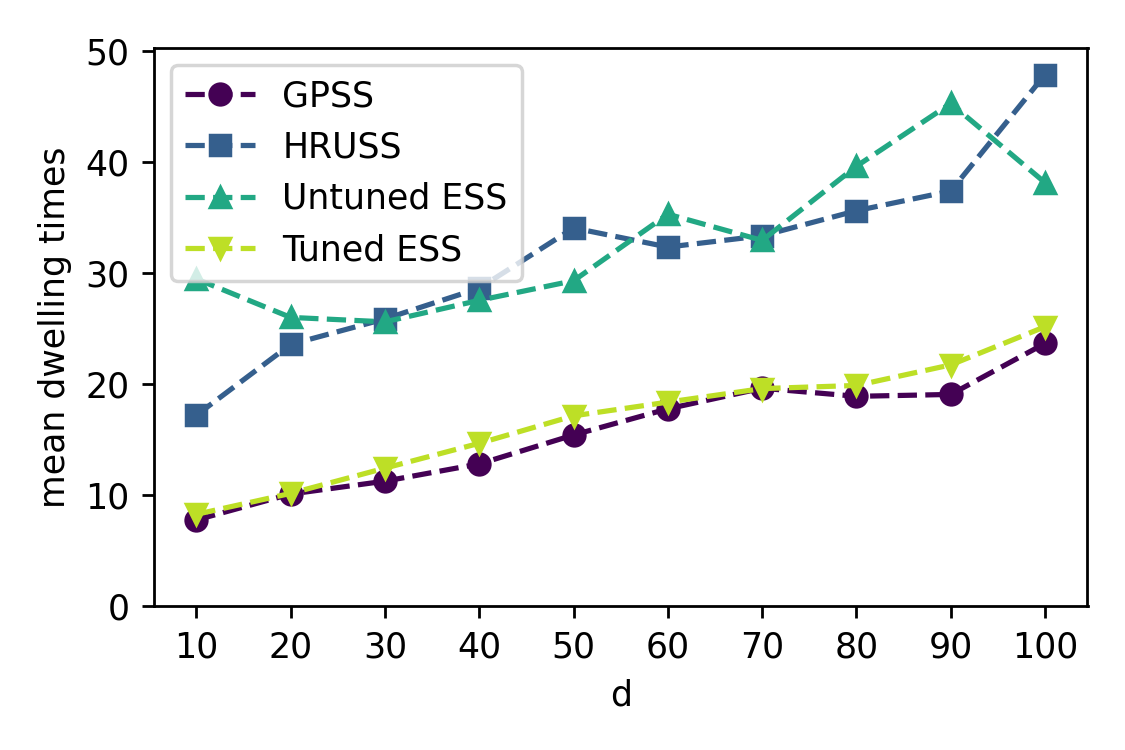}
	\end{center}
	\caption{Progression over dimensions $d$ of the mean dwelling time, determined based on $N=10^5$ iterations, in the axial modes experiment, as described in Section \ref{SubSec:axial_modes}. \label{Fig:axial_modes_jumps}}
\end{figure}

\begin{figure}[tb]
	\begin{center}
		\includegraphics[width=0.5\textwidth]{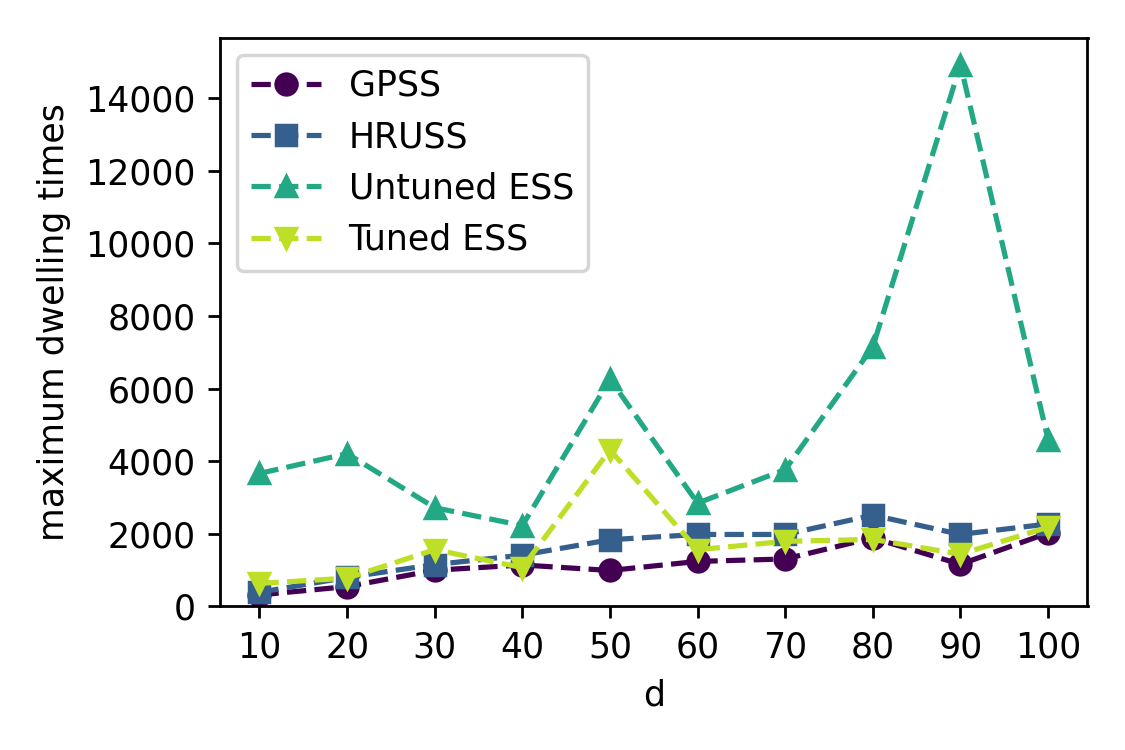}
	\end{center}
	\caption{Progression over dimensions $d$ of the maximum dwelling times within $N=10^5$ iterations in the axial modes experiment, as described in Section \ref{SubSec:axial_modes}. \label{Fig:axial_modes_max_dts}}
\end{figure}

In Figures \ref{Fig:axial_modes_jumps} and \ref{Fig:axial_modes_max_dts} we display for each sampler the progression over the dimensions $d$ of mean and maximum dwelling time. In terms of mean dwelling time, GPSS has a clear advantage over both untuned ESS and HRUSS, only the carefully tuned ESS is competitive with it. In terms of maximum dwelling time, GPSS is slightly ahead of all other samplers, even tuned ESS.
In Appendix \ref{Sec:ExpRes}, Figure \ref{Fig:axial_modes_fixed_dim} we provide a peek into the sampling behind these results by displaying the progression of currently visited mode pair in the last $N_{\text{window}} = 2 \cdot 10^4$ iterations of the samplers' runs for the highest dimension $d=100$.

\subsection{Neal's Funnel} \label{SubSec:funnel}

In our fourth experiment, we considered an arguably even more challenging target density that was originally proposed by \citet{SSNeal} and is commonly termed \textit{Neal's funnel}. It is given by
\begin{equation}
	\varrho_{\nu}(x)
	= \Nc(x_1; 0,9) \prod_{i=2}^d \Nc(x_i; 0, \exp(x_1)) ,
	\label{Eq:funnel}
\end{equation}
where we write $x = (x_1,\ldots,x_d)^T \in \R^d$ and denote by $\Nc(z; \mu,\sigma^2)$ the density of $\Nc(\mu,\sigma^2)$ evaluated at $z$. As the name suggests, the density is shaped like a funnel, having both a very narrow and a very wide region of high probability mass, which smoothly transition into one another. Besides being a challenging target, the funnel is of particular interest to us because its marginal distribution in the first coordinate is simply $\Nc(0,9)$ and a sampler needs to explore both the narrow and the wide part of the funnel equally well in order to properly approximate this marginal distribution via the marginals of its samples (i.e.~the set containing each of its samples truncated after the first component).

We ran the slice samplers for the funnel in dimension ${d=10}$ (as suggested by Neal) and initialized all of them with $x_0 := (2,0,\ldots,0)^T \in \R^d$. For ESS we used the relatively well-tuned covariance $\Sigma = \text{diag}(9,70,\ldots,70)$.
For this experiment we also extend our comparison to a sampler that is only related to GPSS by the fact that it is also an MCMC method. Namely, we compare GPSS with the No-U-Turn Sampler (NUTS) \cite{NUTS}, which is widely regarded to be the current state-of-the-art in MCMC sampling\footnote{Note, however, that NUTS needs the target density to be differentiable and requires oracle access to its gradient, neither of which is true for the slice sampling methods we consider.}. We used the implementation of NUTS provided by the probabilistic programming Python library PyMC \cite{PyMC}.

In this experiment we study the convergence to several target quantities. To enable a fair comparison of the samplers, we took differences in the speed of the iterations into account. This was achieved by allocating a fixed time budget of $1$~minute to each sampler and letting it run for sufficiently many iterations to deplete this time budget, while tracking how much time had elapsed after each completed iteration\footnote{Except for NUTS, where we only approximated this by assuming the runtime per iteration to be constant.}. We assessed the convergence ten times per second, resulting in $600$ measurement times in total. Finally, we determined for each measurement time $t$ and each sampler $s$ the exact number of iterations $n_{t,s} \in \N$ it had completed up to that time (using the aforementioned logs) and estimated the target quantities from only the $n_{t,s}$ samples produced in these iterations.
As target quantities we considered mean, standard deviation, $0.001$-quantile and $0.999$-quantile. To generate estimates of these quantities from marginal samples, we simply used their empirical versions.

\begin{figure}[tb]
	\begin{center}
		\includegraphics[width=0.5\textwidth]{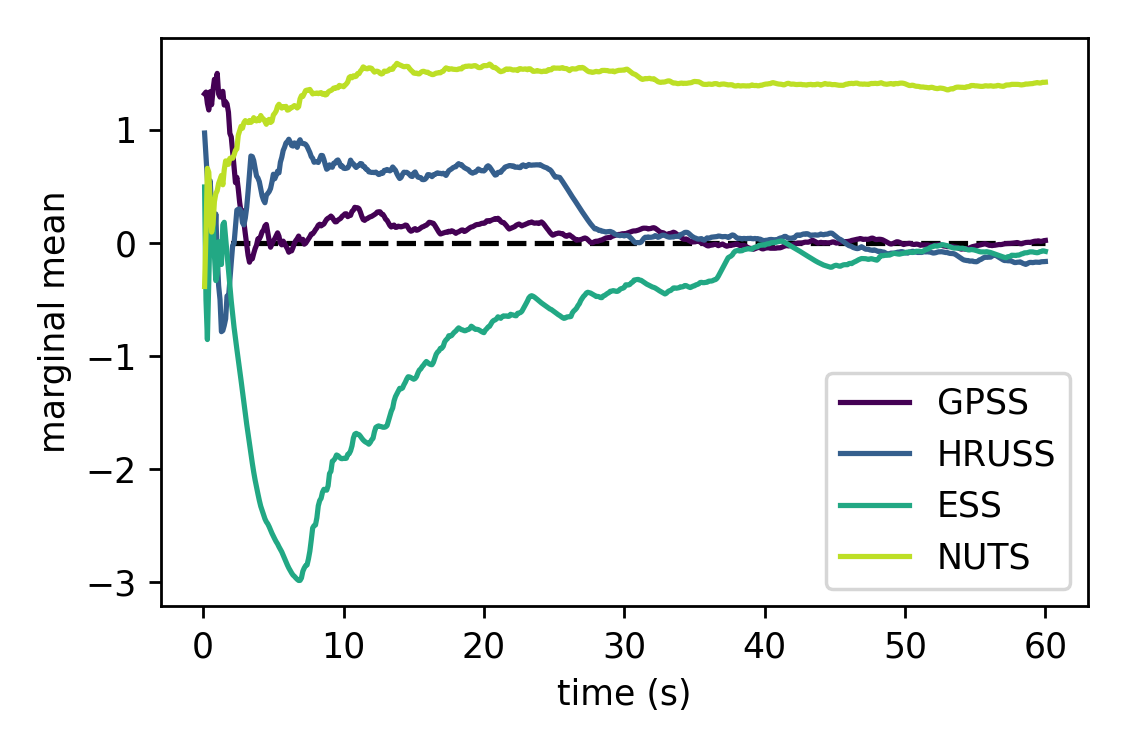}
	\end{center}
	\caption{Progression over runtime of the empirical mean of the marginal samples in the experiment on Neal's funnel \eqref{Eq:funnel}. The dashed black line denotes the ground truth. \label{Fig:funnel_mean}}
\end{figure}

\begin{figure}[tb]
	\begin{center}
		\includegraphics[width=0.5\textwidth]{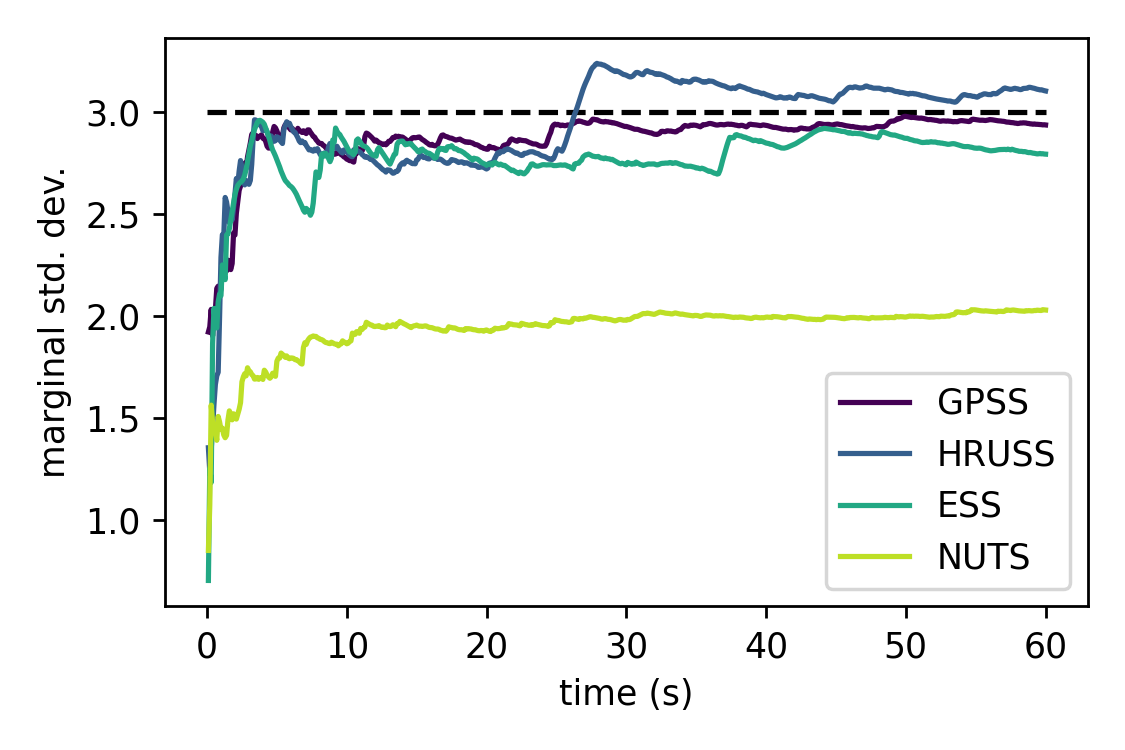}
	\end{center}
	\caption{Progression over runtime of the empirical standard deviation of the marginal samples in the experiment on Neal's funnel \eqref{Eq:funnel}. The dashed black line denotes the ground truth. \label{Fig:funnel_stddev}}
\end{figure}

\begin{figure}[tb]
	\begin{center}
		\includegraphics[width=0.5\textwidth]{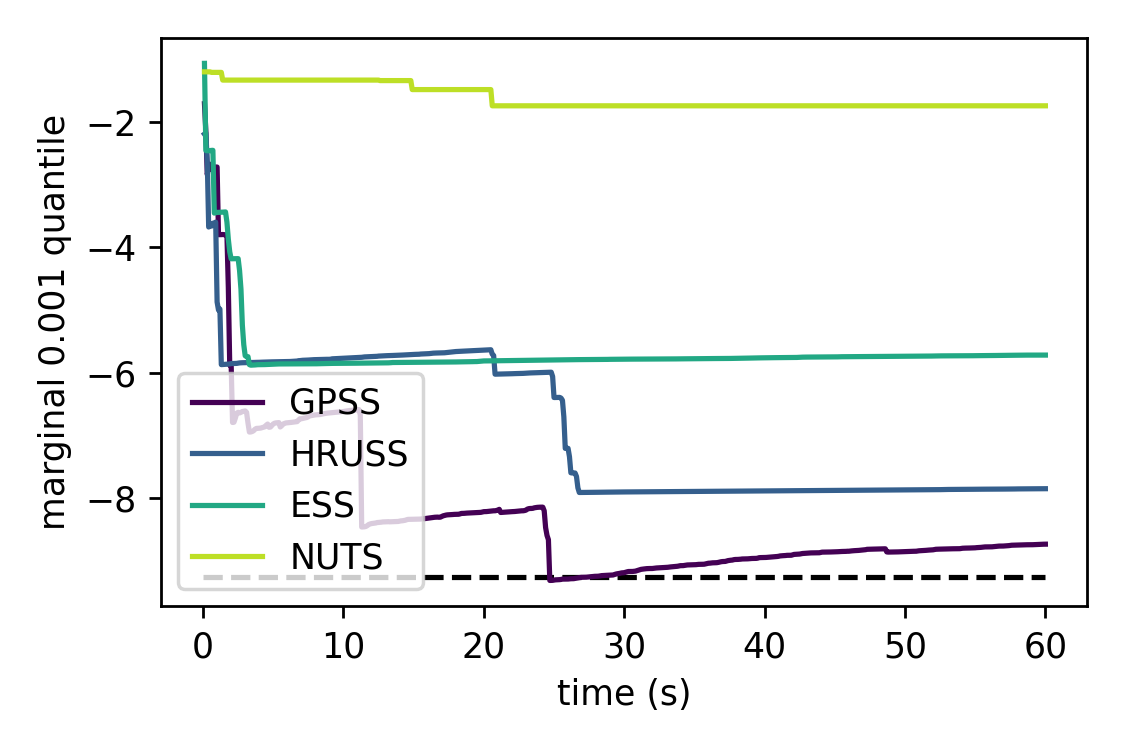}
	\end{center}
	\caption{Progression over runtime of the empirical $0.001$-quantile of the marginal samples in the experiment on Neal's funnel \eqref{Eq:funnel}. The dashed black line denotes the ground truth. \label{Fig:funnel_quant_q1}}
\end{figure}

\begin{figure}[tb]
	\begin{center}
		\includegraphics[width=0.5\textwidth]{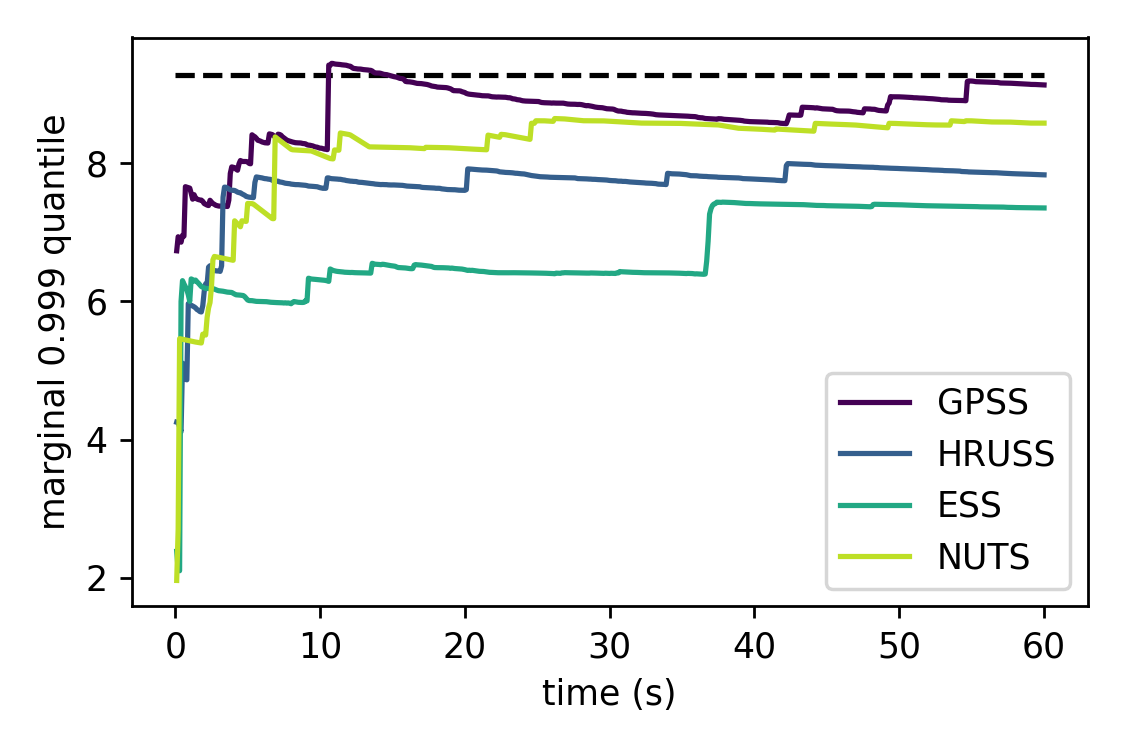}
	\end{center}
	\caption{Progression over runtime of the empirical $0.999$-quantile of the marginal samples in the experiment on Neal's funnel \eqref{Eq:funnel}. The dashed black line denotes the ground truth. \label{Fig:funnel_quant_q2}}
\end{figure}

The progression over time of each sampler's approximations to these target quantities are shown in Figures \ref{Fig:funnel_mean}, \ref{Fig:funnel_stddev}, \ref{Fig:funnel_quant_q1} and \ref{Fig:funnel_quant_q2}. Additionally, we provide the marginal histograms of all samples generated within the time budget as well as a peek into the progression of the marginal samples and sample radii in Appendix \ref{Sec:ExpRes}, Figure \ref{Fig:funnel_qualitative}.

It can be clearly seen from the first four figures that GPSS performs better overall than any of the other methods. We note that HRUSS is more competitive with it than ESS and that GPSS converges well despite completing only $3.39 \cdot 10^5$ iterations in the allotted time, which is less than half the $7.95 \cdot 10^5$ iterations completed by HRUSS, and about the same as the $2.94 \cdot 10^5$ completed by ESS. In other words, had we used the same number of iterations for all slice samplers -- like in other experiments -- the convergence results would attribute GPSS an even larger advantage.
Regarding the performance of NUTS, we observe that it is very successful at retrieving the 0.999-quantile, but, due to its refusal to enter the narrow part of the funnel, it performs worst among the four methods not just for the 0.001-quantile, but also for mean and standard deviation.
%This goes to show that even an MCMC method powerful enough to be regarded as state-of-the-art and used per default in various probabilistic programming languages still has severe limitations.

\section{Discussion} \label{Sec:Discussion}

% Contribution
We introduced a Gibbsian polar slice sampling (GPSS) framework as a general Markov chain approach for approximate sampling from distributions on $\R^d$ given by an unnormalized Lebesgue density. The efficiently implementable version that we propose is essentially tuning-free. It has only a single hyperparameter with little impact on the method's performance, as long as it is not chosen orders of magnitude too small. GPSS can quickly produce samples of high quality, and numerical experiments indicate advantages compared to related approaches in a variety of settings.

% Limitations -> teasing the tuning project
Although GPSS is generally quite robust, its performance slowly deteriorates when the distance between the target distribution's center of mass %\footnote{For the sake of simplicity, we are assuming the target to be effectively unimodal here, though the mode may be misshapen.} % TODO toggle for more/less remaining space
and the coordinate origin is increased. This could potentially be avoided by automatically centering the target on the origin. However, such a modification would likely follow an adaptive MCMC approach and therefore result in a method that no longer fits the ``strict'' MCMC framework.

% Envisioned Types of Applications
Particularly good use cases for GPSS appear to be heavy-tailed target distributions. For example, one could apply GPSS to intractable posterior distributions resulting from Cauchy priors (perhaps on just some of the variables) and likelihoods that do not change the nature of the tails. As illustrated in Section \ref{SubSec:std_Cauchy}, GPSS can have enormous advantages over related methods when heavy tails are involved.
Another type of target for which GPSS could be of practical use are distributions with strong funneling, which naturally occur in Bayesian hierarchical models (cf.~\citet{SSNeal}). By nature, these targets call for methods with variable step sizes, because they contain both very narrow regions, which necessitate small step sizes, and very wide regions, in which much larger step sizes are advantageous to speed up the exploration of the sample space. Due to their use of variable step sizes, slice sampling methods might be considered a natural choice and, as demonstrated in Section \ref{SubSec:funnel}, GPSS appears to perform better than related slice samplers for such targets.

% Some More Future Work
We envision that many more applications for GPSS will be found. Moreover, we think it may be possible to derive qualitative or even quantitative geometric convergence guarantees for GPSS. Aside from giving helpful insight into how well GPSS retains the desirable theoretical properties of PSS, this would further justify using GPSS in real-world applications, where the sample quality is often hard to validate.
Finally, we emphasize that the 2nd variant of GPSS, the intermediate step between idealized framework and efficiently implementable method, can also be used as the foundation for a variety of other hybrid samplers (cf.~Section \ref{Sec:AltMech}).

\section*{Acknowledgements}
We thank the anonymous referees for their suggestions. PS and MH gratefully acknowledge funding by the Carl Zeiss Foundation within the program ``CZS Stiftungsprofessuren'' and the project ``Interactive Inference''. We are grateful for the support of the DFG within project 432680300 -- SFB 1456 subprojects A05 and B02.

\appendix

\begin{figure}[tb]
	\begin{center}
		\includegraphics[width=0.4\textwidth]{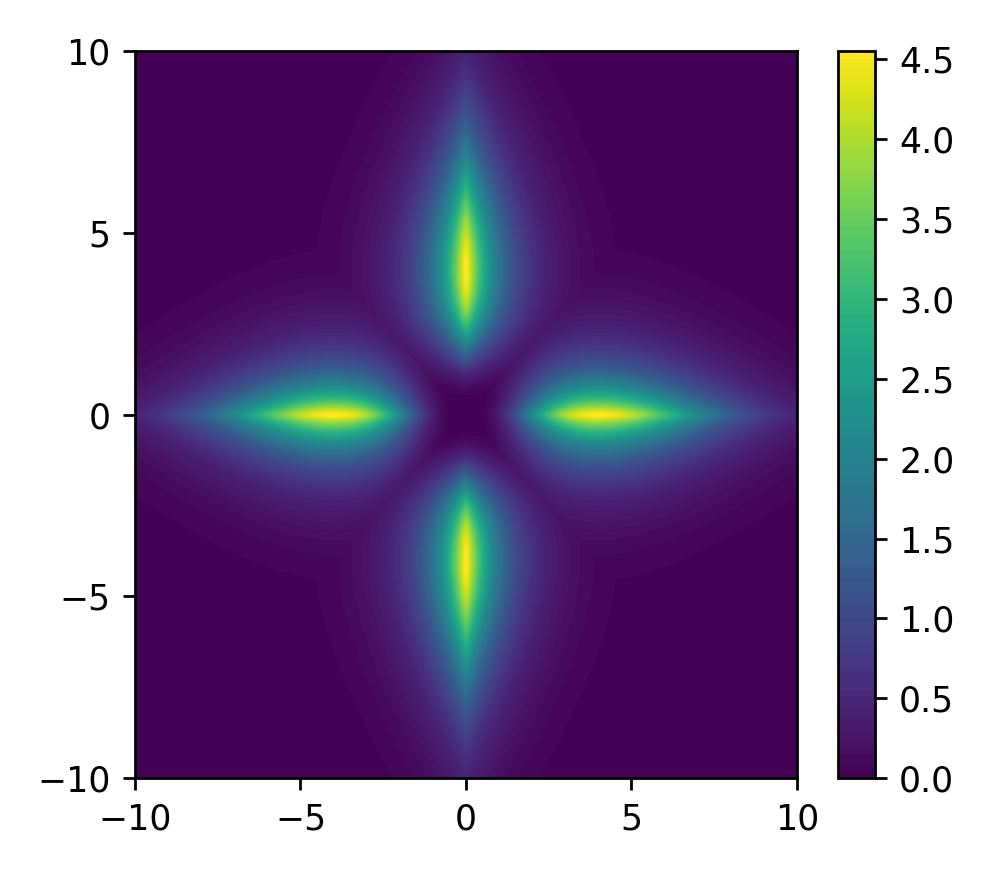}
	\end{center}
	\caption{Illustration of the axial modes target density \eqref{Eq:axial_modes_target} in dimension $d=2$. \label{Fig:axial_modes_target}}
\end{figure}

\section{Formal Derivation} \label{Sec:Deri}

In order to explain how PSS as described in Section \ref{Sec:Intro} can be derived from the usual slice sampling framework laid out in \cite{BesagGreen}, we provide two technical tools. The first one is a well-known identity from measure theory.

\begin{proposition} \label{Prop:polar_coords}
	Any integrable real-valued function $g$ on $(\R^d,\B(\R^d))$ satisfies
	\begin{equation*}
		\int_{\R^d} g(x) \d x
		= \int_0^{\infty} \int_{\sph^{d-1}} g(r \theta) r^{d-1} \sigma_d(\d \theta) \d r ,
	\end{equation*}
	where $\sigma_d$ is the surface measure on $(\sph^{d-1},\B(\sph^{d-1}))$.
\end{proposition}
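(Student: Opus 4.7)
The plan is to establish the identity via the standard measure-theoretic argument built around the polar transformation
\[
\Phi\colon (0,\infty) \times \sph^{d-1} \to \R^d\setminus\{0\}, \qquad \Phi(r,\theta) := r\theta,
\]
which is a bijection onto its image. First I would define a Borel measure $\mu$ on $(\R^d, \B(\R^d))$ by
\[
\mu(A) := \int_0^{\infty} \int_{\sph^{d-1}} \ind_A(r\theta)\, r^{d-1}\, \sigma_d(\d\theta)\, \d r, \qquad A \in \B(\R^d),
\]
and verify that $\mu$ is well-defined and $\sigma$-finite (e.g.\ by checking $\mu(B_n(0)) < \infty$ for the open ball of radius $n$), noting that $\mu(\{0\}) = 0$ so whether one integrates over $\R^d$ or $\R^d \setminus \{0\}$ is immaterial.

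The key step is to identify $\mu$ with the $d$-dimensional Lebesgue measure $\lambda_d$. Following the standard treatment (e.g.\ in Schilling's textbook, already cited in the bibliography), one adopts the characterization of the surface measure $\sigma_d$ as precisely the measure on $\sph^{d-1}$ for which the pushforward of $r^{d-1}\,\d r \otimes \sigma_d$ under $\Phi$ equals $\lambda_d$ on $\R^d \setminus \{0\}$; equivalently, one verifies that $\mu$ and $\lambda_d$ agree on the $\pi$-system $\{B_R(0) : R > 0\}$, which reduces after a one-dimensional substitution to the familiar volume formula $\sigma_d(\sph^{d-1}) \cdot R^d / d = \lambda_d(B_R(0))$. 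Since this $\pi$-system generates $\B(\R^d)$ and both measures are $\sigma$-finite, Dynkin's $\pi$-$\lambda$ theorem (uniqueness of measures) yields $\mu = \lambda_d$ on all of $\B(\R^d)$.

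Finally, the claimed identity for general integrable $g$ follows by the usual approximation procedure: it holds for indicator functions $g = \ind_A$ by the previous paragraph, extends by linearity of both sides to nonnegative simple functions, by monotone convergence to nonnegative measurable functions, and then by the decomposition $g = g^+ - g^-$ together with $g \in L^1(\R^d)$ to the stated form (Fubini applied to the right-hand side is legitimate since it is a repeated integral against a $\sigma$-finite product-type measure and the absolute integrability carries over). The main obstacle is really the middle identification step, but this is essentially a matter of pinning down the definition of $\sigma_d$; once this normalization is fixed in accordance with the cited reference, the remainder is routine bookkeeping.
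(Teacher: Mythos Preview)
The paper does not actually prove this proposition; it simply cites Theorem~15.13 in \cite{Schilling}. Your sketch therefore goes well beyond what the paper does, and the overall strategy (define the image measure, identify it with $\lambda_d$, extend from indicators to $L^1$ by the standard machinery) is the right one and is essentially what the cited reference carries out.

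There is, however, a genuine gap in your identification step. The collection $\{B_R(0) : R > 0\}$ is indeed a $\pi$-system, but it does \emph{not} generate $\B(\R^d)$: the $\sigma$-algebra it generates consists only of rotationally invariant Borel sets, i.e.\ sets of the form $\{x : \norm{x} \in A\}$ for $A \in \B(\coint{0}{\infty})$. For $d \geq 2$ this is a strict sub-$\sigma$-algebra, so agreement of $\mu$ and $\lambda_d$ on these balls is far from sufficient to conclude $\mu = \lambda_d$ via the uniqueness-of-measures theorem. The two routes you hint at are thus not ``equivalent'': either you take the definitional route and let $\sigma_d$ be \emph{defined} as the unique measure on $\sph^{d-1}$ making the pushforward identity hold (which is how Schilling proceeds, and then the proposition is essentially immediate), or you must work with a richer $\pi$-system such as polar rectangles $\{r\theta : a < r < b,\ \theta \in C\}$ with $C$ ranging over a generating class for $\B(\sph^{d-1})$, which does generate $\B(\R^d \setminus \{0\})$. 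Either fix is straightforward, but as written the argument does not close.
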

\begin{proof}
	See for example Theorem 15.13 in \cite{Schilling}.
\end{proof}

Though this identity does not by itself involve any probabilistic quantities, we apply it in a stochastic setting to obtain our second tool, which we term \textit{sampling in polar coordinates}.

\begin{corollary} \label{Cor:SIPC}
	Let $\xi$ be a distribution on $(\R^d, \B(\R^d))$ with probability density $\varrho_{\xi}$. Then by Proposition \ref{Prop:polar_coords} we get for any $A \in \B(\R^d)$ that
	\begin{align*}
		\xi(A) 
		&= \int_{\R^d} \ind_A(x) \varrho_{\xi}(x) \d x \\
		&= \int_{\sph^{d-1}} \int_0^{\infty} \ind_A(r \theta) \varrho_{\xi}(r \theta) r^{d-1} \d r \sigma_d(\d \theta) .
	\end{align*}
	Consequently a random variable $X \sim \xi$ can be sampled in polar coordinates as $X := R \cdot \Theta$ by sampling $(R, \Theta)$ from the joint distribution with probability density
	\begin{equation*}
		(r, \theta) \mapsto \varrho_{\xi}(r \theta) r^{d-1} \ind_{\R_+}(r)
	\end{equation*}
	w.r.t.~$\lambda_1 \otimes \sigma_d$, where $\lambda_1$ denotes the Lebesgue measure on $(\R, \B(\R))$.
\end{corollary}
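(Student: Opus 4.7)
The plan is to derive the displayed integral identity as a direct application of Proposition~\ref{Prop:polar_coords}, and then translate it into the claimed sampling statement via a change-of-variables interpretation. The only delicate point is that Proposition~\ref{Prop:polar_coords} is stated for a single integrable function, whereas we want the identity to hold for all Borel sets $A$ simultaneously, but this is immediate once we fix $A$.

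First, for fixed $A \in \B(\R^d)$, I would apply Proposition~\ref{Prop:polar_coords} to the function $g(x) := \ind_A(x) \varrho_{\xi}(x)$. This $g$ is measurable and integrable since $0 \leq g \leq \varrho_{\xi}$ and $\varrho_{\xi}$ is a probability density. The proposition then gives
\begin{equation*}
\xi(A) = \int_{\R^d} g(x) \, \d x = \int_0^{\infty} \int_{\sph^{d-1}} \ind_A(r\theta) \, \varrho_{\xi}(r\theta) \, r^{d-1} \, \sigma_d(\d \theta) \, \d r,
\end{equation*}
and Fubini (applicable because the integrand is nonnegative) lets me swap the order of integration to obtain the form displayed in the statement. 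Note that the behavior at $r=0$ is irrelevant since $\{0\}$ has Lebesgue measure zero, which also justifies inserting the factor $\ind_{\R_+}(r)$ freely.

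Second, to obtain the sampling interpretation, I would define the joint density
\begin{equation*}
f(r,\theta) := \varrho_{\xi}(r\theta) \, r^{d-1} \, \ind_{\R_+}(r)
\end{equation*}
with respect to the product measure $\lambda_1 \otimes \sigma_d$ on $\R \times \sph^{d-1}$. Taking $A = \R^d$ in the identity above shows $\int f \, \d(\lambda_1 \otimes \sigma_d) = \xi(\R^d) = 1$, so $f$ is a bona fide probability density. If $(R,\Theta)$ has joint density $f$, then for any $A \in \B(\R^d)$,
\begin{equation*}
\P(R\Theta \in A) = \int \ind_A(r\theta) \, f(r,\theta) \, \d(\lambda_1 \otimes \sigma_d)(r,\theta) = \xi(A),
\end{equation*}
by the identity just established. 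Hence $X := R \cdot \Theta \sim \xi$, as claimed.

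There is essentially no main obstacle here: the result is a transcription of Proposition~\ref{Prop:polar_coords} into distributional language. The only care needed is checking integrability and measurability of $\ind_A \varrho_{\xi}$ so that the proposition applies, plus a brief remark that the set $\{r=0\}$ is $\lambda_1$-null and therefore may be excluded from the support of $f$ without affecting any integrals.
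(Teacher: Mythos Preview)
Your proposal is correct and follows exactly the approach the paper intends. In fact, the paper does not provide a separate proof for this corollary at all: the derivation is embedded in the statement itself (``Then by Proposition~\ref{Prop:polar_coords} we get \ldots''), treating the result as an immediate consequence of the polar-coordinates identity. Your write-up simply makes explicit the steps the paper leaves implicit---checking integrability of $\ind_A \varrho_{\xi}$, invoking Fubini for the order swap, verifying that $f$ integrates to $1$, and confirming $\P(R\Theta \in A) = \xi(A)$---and there is nothing to add or correct.
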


We can now apply the second tool to PSS. In the framework of \citet{BesagGreen}, the $X$-update of slice sampling for a target density factorized as in \eqref{Eq:den_fac} is to be performed by sampling $X_n$ from the distribution with unnormalized density
\begin{equation*}
	x \mapsto \varrho_{\nu}^{(0)}(x) \ind_{\!\ooint{t_n}{\infty}}(\varrho_{\nu}^{(1)}(x)) 
	= \norm{x}^{1-d}\ind_{\!\ooint{t_n}{\infty}}(\varrho_{\nu}^{(1)}(x)) .
\end{equation*}
By Corollary \ref{Cor:SIPC}, this can equivalently be done by sampling $X_n$ in polar coordinates as $X_n := R_n \Theta_n$, where $(R_n, \Theta_n)$ is drawn from the joint distribution with unnormalized density
\begin{align*}
	(r, \theta) 
	\mapsto \; &r^{1-d} \ind_{\!\ooint{t_n}{\infty}}(\varrho_{\nu}^{(1)}(r\theta))\, r^{d-1} \ind_{\R_+}(r) \\
	&= \ind_{\!\ooint{t_n}{\infty}}(\varrho_{\nu}^{(1)}(r\theta)) \ind_{\R_+}(r) 
\end{align*}
w.r.t.~$\lambda_1 \otimes \sigma_d$. As this distribution is simply \eqref{Eq:PSS_Xup}, PSS as described in Section \ref{Sec:Intro} is equivalent to the slice sampler resulting from factorization \eqref{Eq:den_fac} in the general slice sampling framework of \citet{BesagGreen}.

\section{Proofs} \label{Sec:Proofs}

We assume some familiarity with transition kernels and Markov chains throughout this section. For details we refer to the introductory sections of \cite{MCbook}.

We introduce some notation and provide a few observations. Let $C_{\nu} := \int_{\R^d} \varrho_{\nu}(x) \d x$, so that
\begin{equation*}
	C_{\nu} \, \nu(\d x) = \varrho_{\nu}^{(0)}(x) \varrho_{\nu}^{(1)}(x) \d x .
\end{equation*}
For $t>0$ define the level set
\begin{equation*}
	L(t)
	:= \{x \in \R^d \mid \varrho_{\nu}^{(1)}(x) > t\}.
\end{equation*}
Note that, by
\begin{equation*}
	\ind_{\ooint{0}{\varrho_{\nu}^{(1)}(x)}}(t) = \ind_{L(t)}(x) ,
\end{equation*}
the transition kernel corresponding to PSS for $\nu$ can be expressed as
\begin{equation*}
	P(x,A)
	:= \frac{1}{\varrho_{\nu}^{(1)}(x)} \int_0^{\infty} \mu_t(A) \ind_{L(t)}(x) \d t 
\end{equation*}
for $x \in \R^d, A \in \B(\R^d)$, where we set
\begin{equation*}
	\mu_t(A)
	:= \frac{\int_A \varrho_{\nu}^{(0)}(x) \ind_{L(t)}(x) \d x}{\int_{\R^d} \varrho_{\nu}^{(0)}(x) \ind_{L(t)}(x) \d x}
\end{equation*}
for $t > 0$. It is known that $\nu P = \nu$ \cite{PolarSS}. We formulate a criterion for invariance w.r.t.~imitations of polar slice sampling.

\begin{lemma} \label{Lem:HSS_inv}
	Suppose that for each $t > 0$ we have a transition kernel $U_X^{(t)}$ on $\R^d \times \B(\R^d)$ with $\mu_t U_X^{(t)} = \mu_t$. Then the transition kernel $Q$ on $\R^d \times \B(\R^d)$ given by
	\begin{equation*}
		Q(x,A)
		:= \frac{1}{\varrho_{\nu}^{(1)}(x)} \int_0^{\infty} U_X^{(t)}(x,A) \ind_{L(t)}(x) \d t 
	\end{equation*}
	satisfies $\nu Q = \nu$.
\end{lemma}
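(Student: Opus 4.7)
The plan is a direct calculation showing $\nu Q(A) = \nu(A)$ for every $A \in \B(\R^d)$, hinging on the definition of $\nu$ in terms of the product $\varrho_\nu^{(0)} \varrho_\nu^{(1)}$ and on the assumed $U_X^{(t)}$-invariance of $\mu_t$ for each $t$. The guiding observation is that $\varrho_\nu^{(0)}(x) \ind_{L(t)}(x)$ is, up to the normalization
\[
	Z(t) := \int_{\R^d} \varrho_\nu^{(0)}(x) \ind_{L(t)}(x) \d x ,
\]
exactly the density of $\mu_t$; this is what will allow us to pass from the Lebesgue integral to a $\mu_t$-integral and invoke invariance.

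First, I would write
\[
	\nu Q(A)
	= \frac{1}{C_\nu} \int_{\R^d} Q(x,A) \varrho_\nu^{(0)}(x) \varrho_\nu^{(1)}(x) \d x
\]
and plug in the definition of $Q$. The factor $1/\varrho_\nu^{(1)}(x)$ inside $Q$ cancels the $\varrho_\nu^{(1)}(x)$ coming from $\nu$, and after swapping the order of integration by Tonelli (the integrand is nonnegative), we arrive at
\[
	\nu Q(A)
	= \frac{1}{C_\nu} \int_0^\infty \int_{\R^d} U_X^{(t)}(x,A) \varrho_\nu^{(0)}(x) \ind_{L(t)}(x) \d x \d t .
\]

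Next I would rewrite the inner integral as $Z(t) \int_{\R^d} U_X^{(t)}(x,A) \mu_t(\d x)$ and apply the hypothesis $\mu_t U_X^{(t)} = \mu_t$ to obtain $\mu_t(A)$. Unfolding $Z(t) \mu_t(A) = \int_A \varrho_\nu^{(0)}(x) \ind_{L(t)}(x) \d x$, Tonelli again lets me swap the order of integration, giving
\[
	\nu Q(A)
	= \frac{1}{C_\nu} \int_A \varrho_\nu^{(0)}(x) \int_0^\infty \ind_{L(t)}(x) \d t \d x .
\]
The inner integral equals $\varrho_\nu^{(1)}(x)$ since $\ind_{L(t)}(x) = \ind_{(0,\varrho_\nu^{(1)}(x))}(t)$, so the right-hand side collapses to $\tfrac{1}{C_\nu} \int_A \varrho_\nu^{(0)}(x) \varrho_\nu^{(1)}(x) \d x = \nu(A)$.

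The only real subtlety is that $\varrho_\nu^{(1)}(x)$ may vanish, in which case $Q(x,\cdot)$ is not defined by the stated formula; however, $\nu$-almost every $x$ has $\varrho_\nu^{(1)}(x) > 0$, so one fixes an arbitrary probability measure on $\B(\R^d)$ for $Q(x,\cdot)$ on this null set and the above computation is unaffected. Apart from this minor bookkeeping and the two routine applications of Tonelli, there is no genuine obstacle: the argument is essentially the standard slice-sampling invariance computation with the extra step of recognizing $\varrho_\nu^{(0)} \ind_{L(t)}$ as proportional to the density of $\mu_t$.
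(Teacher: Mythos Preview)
Your proof is correct and follows essentially the same route as the paper: expand $\nu Q(A)$, cancel $\varrho_\nu^{(1)}$, swap the order of integration, recognize $\varrho_\nu^{(0)}\ind_{L(t)}$ as proportional to the density of $\mu_t$, apply the assumed invariance $\mu_t U_X^{(t)}=\mu_t$, and undo the steps. The only cosmetic difference is in the final collapse: the paper reassembles the expression into $C_\nu\,\nu P(A)$ for the PSS kernel $P$ and invokes the known fact $\nu P=\nu$, whereas you compute directly via $\int_0^\infty \ind_{L(t)}(x)\,\d t=\varrho_\nu^{(1)}(x)$, which is arguably more self-contained.
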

\begin{proof}
	Let $A \in \B(\R^d)$ arbitrary. Then
	\begin{align*}
		&C_{\nu} \cdot \nu Q (A) 
		= C_{\nu} \int_{\R^d} Q(x,A) \nu(\d x) \\ % def
		&= \int_{\R^d} \int_0^{\infty} U_X^{(t)}(x,A) \ind_{L(t)}(x) \d t \varrho_{\nu}^{(0)}(x) \d x \\ % identity for \nu(\d x) from above, def Q
		&= \int_0^{\infty} \int_{\R^d} U_X^{(t)}(x,A) \varrho_{\nu}^{(0)}(x) \ind_{L(t)}(x) \d x \d t \\ % Fubini
		&= \int_0^{\infty} \left( \int_{\R^d} U_X^{(t)}(x,A) \mu_t(\d x) \right) \\
		&\qquad\;\;\, \cdot \left( \int_{\R^d} \varrho_{\nu}^{(0)}(x) \ind_{L(t)}(x) \d x \right) \d t \\ % def \mu_t
		&= \int_0^{\infty} \mu_t(A) \int_{\R^d} \varrho_{\nu}^{(0)}(x) \ind_{L(t)}(x) \d x \d t \\ % first formula in the proof
		&= \int_{\R^d} \int_0^{\infty} \mu_t(A) \ind_{L(t)}(x) \d t \varrho_{\nu}^{(0)}(x) \d x \\ % Fubini
		&= C_{\nu} \int_{\R^d} P(x,A) \nu(\d x) \\ % identity for \nu(\d x) from above
		&= C_{\nu} \cdot \nu P (A) = C_{\nu} \cdot \nu(A) , % \nu P = \nu
	\end{align*}
	which shows $\nu Q = \nu$.
\end{proof}

Note that the framework we rely on was previously used in Lemma~1 in \cite{Latuszynski} to analyze the reversibility of hybrid uniform slice sampling. Further note that in proving Lemma \ref{Lem:HSS_inv} we never use the precise factorization of $\varrho_{\nu}$ into $\varrho_{\nu}^{(0)}$ and $\varrho_{\nu}^{(1)}$ that is dictated by PSS. Hence the result also holds true for any other slice sampling scheme that adheres to this format.
Moreover, in the previous lemma it is assumed that $U_X^{(t)}$ is a transition kernel on $\R^d\times \B(\R^d)$. For given $t>0$ it is sometimes more convenient to define a corresponding transition kernel $U_X^{(t)}$ on $L(t)\times \B(L(t))$. In such cases we consider $\overline{U}_X^{(t)}$ as extension of $U_X^{(t)}$ to $\R^d\times \B(\R^d)$ given as 
\begin{equation*}
	\overline{U}_X^{(t)}(x,A) = 
	\begin{cases}
		U^{(t)}_X(x,A\cap L(t)) & x\in L(t) \\
		\ind_A(x) & x\not\in L(t) .
	\end{cases}	
\end{equation*}
In the following we write $U_X^{(t)}$ for $\overline{U}_X^{(t)}$ and consider $U_X^{(t)}$ as extension if necessary.
We turn our attention to the 1st variant of GPSS.
\begin{lemma} \label{Lem:GPSS1_update_inv}
	For $T_n=t$ the transition kernel $U^{(t)}_{X}$ on $(L(t), \B(L(t)))$ of the $X$-update of the 1st variant of GPSS takes the form $U^{(t)}_X = U_{D_1}^{(t)} U_R^{(t)}$ with kernels
	\begin{align}
		U_{D_1}^{(t)}(r\theta,A) & = \frac{\int_{\sph^{d-1}} \ind_A(r \tilde{\theta}) \sigma_d(\d \tilde{\theta})}{\int_{\sph^{d-1}} \ind_{L(t)}(r \tilde{\theta}) \sigma_d(\d \tilde{\theta})}, \notag \\
		U_R^{(t)}(r\theta,A) & = \frac{\int_0^{\infty} \ind_A(\tilde{r} \theta) \d \tilde{r}}{\int_0^{\infty} \ind_{L(t)}(\tilde{r} \theta) \d \tilde{r}}, 
		\label{al: radius_update_kernel}
	\end{align}
	where $r \in \R_+, \theta \in \sph^{d-1}, x=r \theta \in L(t)$ and $A \in \B(L(t))$. Moreover, $U_{D_1}^{(t)}$ and $U_R^{(t)}$ are reversible w.r.t.~$\mu_t$.
\end{lemma}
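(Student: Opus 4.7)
The identification of $U_{D_1}^{(t)}$ and $U_R^{(t)}$ is essentially a restatement of the algorithmic description of variant~1: for a fixed threshold $t$, the direction update takes $x = r\theta \in L(t)$ and draws a new direction uniformly on $\{\vartheta \in \sph^{d-1} \mid r\vartheta \in L(t)\}$ w.r.t.\ $\sigma_d$, while the radius update then draws a new radius uniformly on $\{\tilde r \in \R_+ \mid \tilde r \theta_n \in L(t)\}$ w.r.t.\ $\lambda_1$. Writing these out gives exactly the claimed expressions, where I should note that the numerators require no extra $\ind_{L(t)}$ factor because $A \in \B(L(t))$ forces $A \subset L(t)$. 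The factorization $U^{(t)}_X = U_{D_1}^{(t)} U_R^{(t)}$ is then just the Markov property of a two-stage sampling scheme.

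The main content is the reversibility of each of the two kernels w.r.t.\ $\mu_t$. The key observation, which I would state right at the beginning, is the polar-coordinate representation of $\mu_t$. Using Proposition~\ref{Prop:polar_coords} together with $\varrho_{\nu}^{(0)}(r\theta)\, r^{d-1} = r^{1-d}\, r^{d-1} = 1$, one obtains
\begin{equation*}
    Z(t) := \int_{\R^d} \varrho_{\nu}^{(0)}(x)\,\ind_{L(t)}(x)\,\d x
    = \int_0^\infty \int_{\sph^{d-1}} \ind_{L(t)}(r\theta)\, \sigma_d(\d\theta)\,\d r,
\end{equation*}
and, for any Borel $A \subset L(t)$,
\begin{equation*}
    \mu_t(A) = Z(t)^{-1} \int_0^\infty \int_{\sph^{d-1}} \ind_A(r\theta)\, \sigma_d(\d\theta)\,\d r.
\end{equation*}
In other words, in polar coordinates $\mu_t$ is simply the uniform distribution on $\{(r,\theta) \mid r\theta \in L(t)\}$ w.r.t.\ $\lambda_1 \otimes \sigma_d$. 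This is the conceptual crux; once it is on the table, the rest of the argument is bookkeeping.

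From there I would verify detailed balance for $U_R^{(t)}$ directly. Abbreviating $\ell(\theta, t) := \int_0^\infty \ind_{L(t)}(\tilde r\theta)\,\d\tilde r$ and writing $A_\theta := \{r>0 \mid r\theta \in A\}$ for Borel $A \subset L(t)$, a single Fubini step yields
\begin{equation*}
    \int_A \mu_t(\d x)\, U_R^{(t)}(x,B)
    = Z(t)^{-1} \int_{\sph^{d-1}} \frac{\lambda_1(A_\theta)\,\lambda_1(B_\theta)}{\ell(\theta,t)}\, \sigma_d(\d\theta),
\end{equation*}
which is symmetric in $(A,B)$. The identical calculation for $U_{D_1}^{(t)}$, with $s(r,t) := \int_{\sph^{d-1}} \ind_{L(t)}(r\tilde\theta)\,\sigma_d(\d\tilde\theta)$ and $A_r := \{\theta \in \sph^{d-1} \mid r\theta \in A\}$, gives
\begin{equation*}
    \int_A \mu_t(\d x)\, U_{D_1}^{(t)}(x,B)
    = Z(t)^{-1} \int_0^\infty \frac{\sigma_d(A_r)\,\sigma_d(B_r)}{s(r,t)}\, \d r,
\end{equation*}
again symmetric in $(A,B)$. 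Reversibility of both kernels follows.

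The only real obstacle is the bookkeeping around extensions of the kernels from $L(t)$ to $\R^d$: strictly speaking $U_{D_1}^{(t)}$ and $U_R^{(t)}$ as written are defined on $L(t)$, and one uses the extension convention $\overline{U}_X^{(t)}$ introduced just above the lemma to make sense of compositions and of integrals against $\mu_t$. Since $\mu_t$ is supported on $L(t)$, this causes no trouble: the points outside $L(t)$ are $\mu_t$-null and the identity extension contributes only symmetric diagonal mass, so detailed balance is preserved. I would note this once and then argue throughout on $L(t)$.
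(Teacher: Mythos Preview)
Your proposal is correct and follows essentially the same route as the paper: both arguments identify the kernels from the algorithmic description, pass to polar coordinates via Proposition~\ref{Prop:polar_coords} (using that $\varrho_\nu^{(0)}(r\theta)\,r^{d-1}=1$), and then compute $\int_A U^{(t)}(x,B)\,\mu_t(\d x)$ for each kernel to obtain an expression symmetric in $A$ and $B$. Your abbreviations $\ell(\theta,t)$, $s(r,t)$, $A_\theta$, $A_r$ and your framing of $\mu_t$ as ``uniform in polar coordinates'' are cosmetic differences only; the underlying computation and the use of Fubini are identical to the paper's.
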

\begin{proof}
	The overall $X$-update consists of consecutively realizing \eqref{Eq:GPSS_dir} and \eqref{Eq:GPSS_rad}. Obviously, the direction update $r_{n-1} \theta_{n-1} \mapsto r_{n-1} \theta_n$ (according to \eqref{Eq:GPSS_dir}) corresponds to $U_{D_1}^{(t)}$ and the radius update $r_{n-1} \theta_n \mapsto r_n \theta_n$ (according to \eqref{Eq:GPSS_rad}) corresponds to $U_R^{(t)}$. Consequently, the transition kernel of the $X$-update is $U^{(t)}_X = U_{D_1}^{(t)} U_R^{(t)}$, that is, $U^{(t)}_X$ is the product of the kernels $U_{D_1}^{(t)}$ and $U_R^{(t)}$.
	
	Now we prove the claimed invariance property. We use the fact that $\varrho_{\nu}^{(0)}(x) = \norm{x}^{1-d}$. To simplify the notation, set
	\begin{equation}
		c_t := \left(\int_{\R^d} \norm{x}^{1-d} \ind_{L(t)}(x) \d x\right)^{-1} ,
		\label{al:normalizing_constant}
	\end{equation}
	so that the restriction of $\mu_t$ to $L(t)$ can be written as 
	\begin{equation*}
		\mu_t(\d x)
		= c_t \norm{x}^{1-d} \d x .
	\end{equation*}
	Observe that Proposition \ref{Prop:polar_coords} yields for all $A, B \in \B(L(t))$
	\begin{align*}
		&\int_A U_{D_1}^{(t)}(x, B) \mu_t(\d x) \\
		&= c_t \int_A U_{D_1}^{(t)}(x, B) \norm{x}^{1-d} \d x \\ % def \mu_t, c_t
		&= c_t \int_0^{\infty} \int_{\sph^{d-1}} \ind_A(r \theta) U_{D_1}^{(t)}(r \theta, B) \sigma_d(\d \theta) \d r \\ % polar coords formula
		&= c_t \int_0^{\infty} \frac{\int_{\sph^{d-1}} \ind_A(r \theta) \sigma_d(\d \theta) \int_{\sph^{d-1}} \ind_B(r \tilde{\theta}) \sigma_d(\d \tilde{\theta})}{\int_{\sph^{d-1}} \ind_{L(t)}(r \tilde{\theta}) \sigma_d(\d \tilde{\theta})} \d r. % def U_{D_1}^{(t)}, pulling its terms out of \theta-integral as they do not depend on \theta
	\end{align*}
	Similarly, again by Proposition \ref{Prop:polar_coords}, we obtain
	\begin{align*}
		&\int_A U_R^{(t)}(x, B) \mu_t(\d x) \\
		&= c_t \int_A U_R^{(t)}(x, B) \norm{x}^{1-d} \d x \\ % def \mu_t, c_t
		&= c_t \int_{\sph^{d-1}} \int_0^{\infty} \ind_A(r \theta) U_R^{(t)}(r \theta, B) \d r \sigma_d(\d \theta) \\ % polar coords formula
		&= c_t \int_{\sph^{d-1}} \frac{\int_0^{\infty} \ind_A(r \theta) \d r \int_0^{\infty} \ind_B(\tilde{r} \theta) \d \tilde{r}}{\int_0^{\infty} \ind_{L(t)}(\tilde{r} \theta) \d \tilde{r}} \sigma_d(\d \theta) . % def U_R^{(t)}, pulling its terms out of r-integral as they do not depend on r
	\end{align*}
	As the last expression in each of these two computations is symmetric in $A$ and $B$, they show both $U_{D_1}^{(t)}$ and $U_R^{(t)}$ to be reversible w.r.t.~$\mu_t$. 
\end{proof}

We also provide a suitable representation for the 2nd variant of GPSS.

\begin{lemma} \label{Lem:GPSS2_update_inv}
	For $T_n=t$ the transition kernel $U^{(t)}_{X}$ on $(L(t), \B(L(t)))$ of the $X$-update of the 2nd variant of GPSS takes the form $U^{(t)}_X = U_{D_2}^{(t)} U_R^{(t)}$ with $U_R^{(t)}$ being specified as in \eqref{al: radius_update_kernel} and a suitable\footnote{We provide an explicit expression of $U_{D_2}^{(t)}$ in the proof of the statement.}, w.r.t.~$\mu_t$ reversible, kernel $U_{D_2}^{(t)}$.
\end{lemma}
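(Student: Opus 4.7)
The plan is to mirror the proof of Lemma~\ref{Lem:GPSS1_update_inv}, since variant~2 differs from variant~1 only in how the direction is updated; the radius update is unchanged, so I retain $U_R^{(t)}$ as in \eqref{al: radius_update_kernel} and keep the factorization $U^{(t)}_X = U_{D_2}^{(t)} U_R^{(t)}$. The remaining work is to give an explicit form for $U_{D_2}^{(t)}$ and prove its $\mu_t$-reversibility.

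I would first describe $U_{D_2}^{(t)}$ explicitly as the kernel on $L(t)$ that fixes the radial component and applies ideal geodesic spherical slice sampling to the directional component. That is, for $x = r\theta \in L(t)$ and $A \in \B(L(t))$ I set
$$U_{D_2}^{(t)}(r\theta, A) := \int_{\sph^{d-1}} \ind_A(r\vartheta)\, K_{t,r}(\theta, \d \vartheta),$$
where $K_{t,r}$ is the transition kernel on $\sph^{d-1}$ that draws $Y$ uniformly on the great subsphere $\sph_{\theta}^{d-2}$ and then samples $\vartheta$ uniformly from the intersection of the unique great circle through $\theta$ and $Y$ with $S_{t,r} := \{\vartheta \in \sph^{d-1} \mid \varrho_{\nu}^{(1)}(r\vartheta) > t\}$. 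This is precisely the construction of \cite{SphericalSS} applied to the uniform target on $S_{t,r}$, and the key external input I rely on is the reversibility of $K_{t,r}$ with respect to the normalized uniform measure $\sigma_{t,r}$ on $S_{t,r}$, which is established there.

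To lift this to reversibility of $U_{D_2}^{(t)}$ with respect to $\mu_t$, I repeat the polar-coordinate computation of Lemma~\ref{Lem:GPSS1_update_inv}. With $\mu_t(\d x) = c_t \norm{x}^{1-d}\d x$ and $c_t$ as in \eqref{al:normalizing_constant}, Proposition~\ref{Prop:polar_coords} gives for $A, B \in \B(L(t))$:
\begin{align*}
\int_A U_{D_2}^{(t)}(x, B)\, \mu_t(\d x)
&= c_t \int_0^{\infty} \int_{\sph^{d-1}} \ind_A(r\theta) \int_{\sph^{d-1}} \ind_B(r\vartheta)\, K_{t,r}(\theta, \d \vartheta)\, \sigma_d(\d \theta)\, \d r \\
&= c_t \int_0^{\infty} \sigma_d(S_{t,r}) \int_{\sph^{d-1}} \int_{\sph^{d-1}} \ind_A(r\theta) \ind_B(r\vartheta)\, K_{t,r}(\theta, \d \vartheta)\, \sigma_{t,r}(\d \theta)\, \d r,
\end{align*}
where the rewriting uses $A \subseteq L(t)$ and hence $\ind_A(r\theta) = \ind_A(r\theta)\ind_{S_{t,r}}(\theta)$. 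The reversibility of $K_{t,r}$ with respect to $\sigma_{t,r}$ makes the inner double integral symmetric in $A$ and $B$, and hence the whole right-hand side as well.

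The main obstacle is the careful invocation of the reversibility of ideal geodesic spherical slice sampling from \cite{SphericalSS}, together with the bookkeeping of extending $K_{t,r}$ to all of $\sph^{d-1}$ (acting trivially off $S_{t,r}$) in the same spirit as the extension convention stated right before Lemma~\ref{Lem:GPSS1_update_inv}; the remaining steps are routine applications of Fubini and Proposition~\ref{Prop:polar_coords}.
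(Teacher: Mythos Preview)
Your proposal is correct and follows essentially the same approach as the paper's proof: both identify $U_{D_2}^{(t)}$ as the kernel that fixes the radius and applies the ideal geodesic spherical slice sampler of \cite{SphericalSS} to the direction (your $K_{t,r}$ is the paper's $S^{(r,t)}$, your $S_{t,r}$ is the paper's $L(t,r)$, your $\sigma_{t,r}$ is the paper's $\gamma^{(r,t)}$), invoke its reversibility with respect to the uniform distribution on the spherical slice from \cite{SphericalSS}, and then lift this to $\mu_t$-reversibility via the same polar-coordinate computation as in Lemma~\ref{Lem:GPSS1_update_inv}. The only cosmetic difference is that the paper spells out $S^{(r,t)}$ explicitly via the geodesic parametrization $g^{(\theta,\vartheta)}$ and uses the notation $A^{(r)} = \{\vartheta \in \sph^{d-1} \mid r\vartheta \in A\}$ rather than writing $\int_{\sph^{d-1}} \ind_A(r\vartheta)\,K_{t,r}(\theta,\d\vartheta)$.
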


\begin{proof}
	We require some further notation: For $r\in\R_+$ and $\theta,\vartheta\in \sph^{d-1}$ let
	$g^{(\theta,\vartheta)} \colon [0,2\pi] \to \sph^{d-1}$ be given by $g^{(\theta,\vartheta)}(\alpha) = \theta \cos(\alpha) + \vartheta \sin(\alpha)$ and define
	\begin{align*}
		L(t,r) & := \{ \theta \in \sph^{d-1} \mid r \theta \in L(t) \}	\\
		L(t,r,\theta,\vartheta) & := \{ \alpha\in [0,2\pi] \mid g^{(\theta,\vartheta)}(\alpha)\in L(t,r) \}.
	\end{align*}
	Define the transition kernel $S^{(t,r)}$ on $L(t,r)\times \B(\sph^{d-1})$ as
	\begin{equation*}
		S^{(r,t)}(\theta,C) := \int_{\sph^{d-2}_{\theta}} \int_{L(t,r,\theta,\vartheta)} \frac{\ind_C(g^{(\theta,\vartheta)}(\alpha))\,\d \alpha}{\lambda_1(L(t,r,\theta,\vartheta))} \xi_{\theta}(\d \vartheta),
	\end{equation*}
	where $\theta\in L(t,r)$ and $C\in\B(\sph^{d-1})$, with $\xi_{\theta}$ being the uniform distribution on $\sph^{d-2}_{\theta}$. 
	The transition kernel $S^{(r,t)}$ coincides with the transition kernel of the ideal geodesic slice sampler on the sphere that is reversible w.r.t.~the uniform distribution on $L(t,r)$, see Lemma~14 in \cite{SphericalSS}. Denote the uniform distribution on $L(t,r)$ as $\gamma^{(r,t)}$, i.e.
	\begin{equation*}
		\gamma^{(r,t)}(\d \theta) = \frac{ \ind_{L(t,r)}(\theta) \sigma_d(\d \theta)}{\sigma_d(L(t,r))}.
	\end{equation*} 
	%	Then, the reversibility particularly implies the invariance of $\gamma^{(r,t)}$ w.r.t.~$S^{(r,t)}$, i.e. $\gamma^{(r,t)} S^{(r,t)} = \gamma^{(r,t)}$.
	Now observe that the transition kernel of the direction update $U_{D_2}^{(t)}$ specified in the 2nd variant of GPSS for $x = r \theta\in L(t)$ with $r\in\R_+$ and $\theta\in \sph^{d-1}$ is given by
	\begin{equation}
		U^{(t)}_{D_2}(r\theta,A) = S^{(r,t)}(\theta,A^{(r)}), 
		\label{Eq:repre_D_2}
	\end{equation}
	where for $A\in \B(\R^d)$ we denote 
	\begin{equation*}
		A^{(r)} := \{ \vartheta\in \sph^{d-1} \mid r\vartheta \in A \}.
	\end{equation*}
	Note that the radius update of the 2nd variant of GPSS coincides with the one of the 1st variant, i.e. $U_R^{(t)}$ is given by \eqref{al: radius_update_kernel}. Hence the total $X$-update takes the form $U^{(t)}_X = U^{(t)}_{D_2} U_R^{(t)}$. It remains to prove the reversibility of $U_{D_2}^{(t)}$ w.r.t.~$\mu_t$: For $A, B \in \B(L(t))$ we have with $c_t$ as in \eqref{al:normalizing_constant} that
	\begin{align*}
		&\int_A U_{D_2}^{(t)}(x, B) \mu_t(\d x) \\
		&= c_t \int_A U_{D_2}^{(t)}(x, B) \norm{x}^{1-d} \d x \\ % def \mu_t, c_t
		&= c_t \int_0^{\infty} \int_{\sph^{d-1}} \ind_A(r \theta) U_{D_2}^{(t)}(r \theta, B) \sigma_d(\d \theta) \d r \\ % polar coords formula
		&= c_t \int_0^{\infty} \int_{A^{(r)}} S^{(r,t)}(\theta,B^{(r)}) \gamma^{(r,t)}(\d \theta) \cdot \sigma_d(L(t,r)) \,\d r, 
	\end{align*}
	which is, by the reversibility of $S^{(r,t)}$ w.r.t.~$\gamma^{(r,t)}$, symmetric in $A$ and $B$. Consequently, $U_{D_2}^{(t)}$ is reversible w.r.t.~$\mu_t$ and the claimed statement is proven.
\end{proof}

Theorem \ref{Thm:GPSS_inv} is now an easy consequence.

\begin{proof}[Proof of Theorem \ref{Thm:GPSS_inv}]
	By Lemmas~\ref{Lem:GPSS1_update_inv} and \ref{Lem:GPSS2_update_inv}, given $T_n=t$, the transition kernel of the $X$-update of the 1st and 2nd variant of GPSS is $U_X^{(t)} := U_{D_i}^{(t)} U_R^{(t)}$ for $i=1,2$ respectively. By the reversibility of the individual kernels w.r.t.~$\mu_t$, see Lemma~\ref{Lem:GPSS1_update_inv} and Lemma~\ref{Lem:GPSS2_update_inv}, we have
	\begin{equation*}
		\mu_t U_X^{(t)} 
		= \mu_t U_{D_i}^{(t)} U_R^{(t)}
		= \mu_t U_R^{(t)}
		= \mu_t ,
	\end{equation*}
	proving that $U_X^{(t)}$ leaves $\mu_t$ for $i=1,2$ invariant. By Lemma \ref{Lem:HSS_inv}, this yields that the variants of GPSS leave the target distribution $\nu$ invariant.
\end{proof}

We add another auxiliary result. 
\begin{lemma} \label{Lem:almost_surely_finite}
	For $(t,\theta)\in \ooint{0}{\infty} \times \sph^{d-1}$ let
	\begin{equation*}
		L(t,\theta) := \{ r \in \R_+ \mid r\theta \in L(t) \} .
	\end{equation*}
	Then $\lambda_1(L(t,y/\norm{y}))<\infty$ holds for $\lambda_1\otimes \lambda_d$-almost all $(t,y)\in \R_+\times \R^d$.
\end{lemma}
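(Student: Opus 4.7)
My plan is to reduce the claim to the integrability of $\varrho_{\nu}$ via polar coordinates and Markov's inequality. The key observation is that $\varrho_{\nu}^{(1)}(r\theta) = r^{d-1} \varrho_{\nu}(r\theta)$, which is exactly the integrand that appears when one writes $\int \varrho_{\nu} \,\d x$ in polar coordinates. Therefore Proposition~\ref{Prop:polar_coords} gives
\begin{equation*}
	C_{\nu}
	= \int_{\R^d} \varrho_{\nu}(x) \d x
	= \int_{\sph^{d-1}} \int_0^{\infty} \varrho_{\nu}^{(1)}(r \theta) \d r \, \sigma_d(\d \theta) < \infty ,
\end{equation*}
so by Tonelli the map $r \mapsto \varrho_{\nu}^{(1)}(r\theta)$ is Lebesgue-integrable on $\R_+$ for $\sigma_d$-almost every $\theta \in \sph^{d-1}$; call the exceptional $\sigma_d$-null set $M$.

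Next I would use Markov's inequality: for every $\theta \notin M$ and every $t > 0$,
\begin{equation*}
	\lambda_1(L(t,\theta))
	= \lambda_1(\{r \in \R_+ \mid \varrho_{\nu}^{(1)}(r\theta) > t\})
	\leq \frac{1}{t} \int_0^{\infty} \varrho_{\nu}^{(1)}(r\theta) \d r < \infty .
\end{equation*}
Hence the ``bad'' set $N := \{(t,\theta) \in \R_+ \times \sph^{d-1} \mid \lambda_1(L(t,\theta)) = \infty\}$ is contained in $\R_+ \times M$, and therefore $(\lambda_1 \otimes \sigma_d)(N) = 0$.

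Finally I would transfer this to $(t,y) \in \R_+ \times \R^d$ using Proposition~\ref{Prop:polar_coords} again. Since $L(t, y/\norm{y}) = L(t, \vartheta)$ whenever $y = s\vartheta$ with $s > 0$ and $\vartheta \in \sph^{d-1}$, the set in question equals $B := \{(t,y) \in \R_+ \times (\R^d \setminus \{0\}) \mid (t, y/\norm{y}) \in N\}$ up to the $\lambda_d$-null set $\R_+ \times \{0\}$. Applying polar coordinates to the $y$-variable yields
\begin{equation*}
	(\lambda_1 \otimes \lambda_d)(B)
	= \int_0^{\infty} \int_{\sph^{d-1}} \ind_N(t, \vartheta) \left( \int_0^{\infty} s^{d-1} \d s \right) \sigma_d(\d \vartheta) \d t ,
\end{equation*}
which, by Fubini applied to $N$, equals zero: for $\lambda_1$-a.e.~$t$ the section $N_t := \{\vartheta \mid (t,\vartheta) \in N\}$ is $\sigma_d$-null, hence the inner double integral in $\vartheta$ and $s$ vanishes for such $t$ (the $\sigma_d$-null indicator kills the otherwise divergent $s$-integral in the measure-theoretic sense $0 \cdot \infty = 0$).

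The main subtlety, and essentially the only thing to be careful about, is this last step: one must argue via the $t$-slicing (rather than swapping the order of integration naively) so as not to run into the divergence of $\int_0^{\infty} s^{d-1} \d s$. Everything else is a routine combination of Proposition~\ref{Prop:polar_coords}, Tonelli's theorem, and Markov's inequality.
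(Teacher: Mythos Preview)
Your proof is correct and follows essentially the same route as the paper: use Proposition~\ref{Prop:polar_coords} to reduce the integrability of $\varrho_{\nu}$ to a statement about $\lambda_1(L(t,\theta))$ on $\R_+\times\sph^{d-1}$, then transfer back to $(t,y)\in\R_+\times\R^d$ via polar coordinates. The only variation is that where the paper invokes the layer-cake identity to obtain $C_{\nu} = \int_0^\infty \int_{\sph^{d-1}} \lambda_1(L(t,\theta))\,\sigma_d(\d\theta)\,\d t$ and concludes finiteness $\lambda_1\otimes\sigma_d$-a.e.\ directly, you use Markov's inequality along each ray; this in fact yields slightly more structure, namely $N \subset \R_+ \times M$ with $\sigma_d(M)=0$, so your final step could be shortened: from $B \subset \R_+ \times \{y \neq 0 : y/\norm{y} \in M\}$ and $\lambda_d(\{y \neq 0 : y/\norm{y} \in M\}) = 0$ (again by polar coordinates) the conclusion is immediate, with no need for the $t$-slicing or the $0\cdot\infty$ discussion.
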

\begin{proof}
	By Proposition \ref{Prop:polar_coords} follows
	\begin{align*}
		\int_{\R^d} \varrho_{\nu}(x) \d x
		&= \int_{\sph^{d-1}} \int_0^\infty \varrho_{\nu}^{(1)}(r\theta) \d r \sigma_d(\d \theta) \\
		&= \int_0^\infty \int_{\sph^{d-1}} \int_0^\infty \ind_{L(t)}(r \theta) \d r \sigma_d(\d \theta) \d t \\
		&= \int_0^\infty \int_{\sph^{d-1}} \lambda_1(L(t,\theta)) \sigma_d(\d \theta) \d t .
	\end{align*}
	From this and the fact that $\varrho_{\nu}$ is integrable by assumption, it is immediate that $\lambda_1(L(t,\theta)) < \infty$ for $\lambda_1 \otimes \sigma_d$-almost all $(t,\theta) \in \R_+ \times \sph^{d-1}$. By defining
	\begin{equation*}
		F := \{(t,\theta) \in \R_+ \times \sph^{d-1} \mid \lambda_1(L(t,\theta)) = \infty \} ,
	\end{equation*}
	we can alternatively express this result as $(\lambda_1 \otimes \sigma_d)(F) = 0$. With another application of Proposition \ref{Prop:polar_coords}, this yields
	\begin{align*}
		&(\lambda_1 \otimes \lambda_d)(\{(t,y) \in \R_+ \times \R^d \mid \lambda_1(L(t, y/\norm{y})) = \infty\}) \\
		&= \; (\lambda_1 \otimes \lambda_d)(\{(t,y) \in \R_+ \times \R^d \mid (t, y/\norm{y}) \in F \}) \\
		&= \int_{\R^d} \int_0^{\infty} \ind_F(t, y/\norm{y}) \d t \d y \\
		&= \int_0^{\infty} \int_{\sph^{d-1}} \int_0^{\infty} \ind_F(t, \theta) \d t \, \sigma_d(\d \theta) r^{d-1} \d r \\
		&= \int_0^{\infty} (\lambda_1 \otimes \sigma_d)(F) \, r^{d-1} \d r 
		= 0 ,
	\end{align*}
	which proves the lemma's claim.
\end{proof}

Now we turn to the proof of Theorem~\ref{Thm:GPSS_variant2_conv}.

\begin{proof}[Proof of Theorem \ref{Thm:GPSS_variant2_conv}]
	We use the same notation as in the proof of Theorem~\ref{Thm:GPSS_inv}.
	Moreover, we know from the proof\footnote{The theorem applies in their notation with $p(\theta)=\ind_{L(t,r)}(\theta)$, $C=L(t,r)$ and $\beta=1$, where also Remark~1 of \cite{SphericalSS} should be taken into account.} of Theorem~15 in \cite{SphericalSS} that there exists a constant $\varepsilon>0$ (independent of $t,r$) such that
	\begin{equation*}
		S^{(r,t)}(\theta,D) \geq \varepsilon \, \sigma_d(D\cap L(t,r)) ,
	\end{equation*}
	for any $\theta\in L(t,r)$ and $D\in\B(\sph^{d-1})$.
	By \eqref{Eq:repre_D_2} this implies for $x=r\theta\in L(t)$ and $B\in\B(L(t))$ that
	\begin{align*}
		& U_{D_2}^{(t)}(r\theta,B) 
		= S^{(r,t)}(\theta,B^{(r)}) \geq \varepsilon \, \sigma_d(B^{(r)}\cap L(t,r))\\
		& = \varepsilon \int_{\sph^{d-1}} \ind_{B\cap L(t)}(r\vartheta) \sigma_d(\d \vartheta)\\
		& = \varepsilon \int_{\sph^{d-1}} \ind_{L(t)}(r\vartheta) \delta_{r\vartheta}(B) \sigma_d(\d \vartheta),
	\end{align*}
	where $\delta_z$ denotes the Dirac-measure at $z\in\R^d$. Concisely written down, the former inequality yields
	\begin{equation*}
		U_{D_2}^{(t)}(r\theta,\d y) \geq \varepsilon \int_{\sph^{d-1}} \ind_{L(t)}(r \vartheta) \delta_{r\vartheta}(\d y) \sigma_d(\d \vartheta) .
	\end{equation*}
	For $\vartheta\in\sph^{d-1}$ we use $L(t,\vartheta)$ as defined in Lemma~\ref{Lem:almost_surely_finite} and note that the normalizing constant within $U_R^{(t)}$ satisfies $\int_{0}^{\infty} \ind_{L(t)}(s\vartheta) \d s = \lambda_1(L(t,\vartheta))$.
	Taking the representation of the $X$-update of the 2nd variant of GPSS into account we obtain (using the same variables as earlier)
	\begin{align*}
		& U_X^{(t)}(r \theta,B) = U_{D_2}^{(t)} U_R^{(t)}(r\theta,B) \\
		& = \int_{\R^d} U_R^{(t)}(y,B) U_{D_2}^{(t)}(r\theta,\d y) \\
		& \geq \varepsilon \int_{\sph^{d-1}} \int_{\R^d} U_R^{(t)}(y,B) \ind_{L(t)}(r\vartheta) \delta_{r\vartheta}(\d y) \sigma_d(\d \vartheta) \\
		& = \varepsilon \int_{\sph^{d-1}} U_R^{(t)}(r\vartheta,B) \ind_{L(t)}(r\vartheta) \sigma_d(\d \vartheta) \\
		& = \varepsilon \int_{\sph^{d-1}} \int_{0}^{\infty} \frac{\ind_{B} (\tilde{r} \vartheta)} {\lambda_1(L(t,\vartheta))} \ind_{L(t)}(r\vartheta) \d\tilde{r}\, \sigma_d(\d \vartheta) \\
		& = \varepsilon \int_B \frac{\ind_{L(t)}(ry/\norm{y})}{\norm{y}^{d-1} \lambda_1(L(t,y/\norm{y}))} \d y ,
	\end{align*}
	where the last equality follows by Proposition~\ref{Prop:polar_coords}. Therefore the transition kernel $Q$ corresponding to the 2nd variant of GPSS satisfies for any $A \in \B(\R^d)$ and $0 \neq x \in \R^d$ with $x = r\theta$ that
	\begin{align*}
		& Q(x,A) = 
		\frac{1}{\varrho_{\nu}^{(1)}(x)} \int_0^{\varrho_{\nu}^{(1)}(x)} U_X^{(t)}(x,A\cap L(t)) \d t \\
		& \geq \frac{\varepsilon}{\varrho_{\nu}^{(1)}(x)} \int_A \int_0^{\varrho_{\nu}^{(1)}(x)} \frac{\ind_{L(t)}(ry/\norm{y}) \ind_{L(t)}(y)}{\norm{y}^{d-1} \lambda_1(L(t,y/\norm{y}))} \d t\, \d y \\
		& = \frac{\varepsilon}{\varrho_{\nu}^{(1)}(x)} \int_A \int_0^{\min\{\varrho_{\nu}^{(1)}(x),\varrho_{\nu}^{(1)}(y),\varrho_{\nu}^{(1)}(ry/\norm{y})\}} \\
		& \qquad \qquad\qquad\quad \cdot \frac{\norm{y}^{1-d} }{\lambda_1(L(t,y/\norm{y}))} \d t\, \d y.
	\end{align*}
	By Lemma~\ref{Lem:almost_surely_finite} we have that the mapping 
	\begin{equation*}
		(t,y) \mapsto \lambda_1(L(t,y/\norm{y}))
	\end{equation*}
	is $\lambda_1\otimes \lambda_d$-almost surely finite, so that we obtain that the function
	\begin{equation*}
		(t,y) \mapsto \frac{\norm{y}^{1-d} }{\lambda_1(L(t,y/\norm{y}))} 
	\end{equation*}
	is $\lambda_1\otimes \lambda_d$-almost surely strictly larger than zero. By the assumption and definition of $\varrho_{\nu}^{(1)}$ we have that $\varrho_{\nu}^{(1)} > 0$ on $\R^d \setminus\{0\}$, such that $Q(x,A) > 0$ whenever $\lambda_d(A) > 0$. We apply the former implication: Let $A\in\B(\R^d)$ such that $\nu(A) > 0$. Then, by the absolute continuity of $\nu$ w.r.t.~$\lambda_d$ we obtain that $\lambda_d(A) > 0$ and thus $Q(x,A)>0$ for any $0 \neq x\in \R^d$. This yields that the transition kernel of the 2nd variant of GPSS is $\nu$-irreducible and aperiodic, as defined in Section~3 in \cite{Tierney}.
	
	Since by Theorem~\ref{Thm:GPSS_inv} we also know that $\nu$ is an invariant distribution of $Q$, applying Theorem~1 of \cite{Tierney} yields that $\nu$ is the unique invariant distribution. Moreover, the same theorem gives that the distribution of $X_n$ converges $\nu$-almost surely (regarding the initial state) to $\nu$ in the total variation distance. 
\end{proof}

\section{Implementation Notes} \label{Sec:ImpNotes}

The code we provide not only allows for the reproduction of our experimental results, but also contains an easily usable, general purpose implementation of GPSS in Python 3.10, based on numpy. 

Those still seeking to implement GPSS themselves, perhaps in another programming language, can pretty much follow Algorithms \ref{Alg:GPSS}, \ref{Alg:OS} and \ref{Alg:RS}. Still we find it appropriate to give two pieces of advice. First, the target density $\varrho_{\nu}$, its transform $\varrho_{\nu}^{(1)}$ and the thresholds $t_n$ should all be moved into log-space in order to make the implementation numerically stable. Second, upon generating a direction proposal, i.e.~immediately following the code-adaptation of line 6 of Algorithm \ref{Alg:OS}, the proposal should be re-normalized. 

Although in theory the proposal should already be normalized, in practice the operations involved in generating it always introduce small numerical errors. Individually, these errors are far too small to matter, but, due to the way in which each direction variable depends on that from the previous iteration, without re-normalization the errors accumulate and lead the direction variables to be more and more unnormalized. If the sampler goes through many thousands of iterations, these accumulating errors eventually introduce large amounts of bias into the sampling, because the initial proposal set in the direction update is no longer a great circle but rather an elongated ellipse. Of course the re-normalization is not necessary if the direction variables are extracted from the full samples whenever required instead of being stored separately.

\section{Additional Sampling Statistics} \label{Sec:SamStats}

\begin{table}[h]
	\caption{Target density evaluations per iteration (TDE/I) and integrated autocorrelation times (IAT) for the experiments presented in Sections \ref{SubSec:std_Cauchy} and \ref{SubSec:hyperplane}. The IATs were computed w.r.t.~the sample log radii in the Cauchy experiment and w.r.t.~the sample radii in the hyperplane experiment.}
	\label{Tab:tdes_iats}
	\vskip 0.15in
	\begin{center}
		\begin{small}
			\begin{sc}
				\begin{tabular}{lcccr}
					\toprule
					& \multicolumn{2}{c}{Cauchy} & \multicolumn{2}{c}{Hyperplane} \\
					Sampler & TDE/I & IAT & TDE/I & IAT \\
					\midrule
					GPSS & 6.90 & 8.59 & 12.23 & 1.09 \\
					HRUSS & 8.46 & 51346.93 & 7.78 & 684.57 \\
					naive ESS & 5.86 & 35543.94 & 7.91 & 199.17 \\
					tuned ESS & -- & -- & 6.51 & 188.13 \\
					\bottomrule
				\end{tabular}
			\end{sc}
		\end{small}
	\end{center}
	\vskip -0.1in
\end{table}

\section{Bayesian Logistic Regression} \label{Sec:LogReg}

In this section we examine how well GPSS performs in a more classical machine learning setting, namely Bayesian logistic regression with a mean-zero Gaussian prior. We begin by giving a brief overview of the problem. Bayesian logistic regression is a binary regression problem. Supposing the training data to be given as pairs $(a^{(i)}, b^{(i)})$, $i=1,...,m$, where $a^{(i)} \in \R^d$ and $b^{(i)} \in \{-1,1\}$, and choosing the prior distribution as $\Nc_d(0, 0.1^2 I_d)$, the posterior density of interest is given by
\begin{equation*}
	\varrho_{\pi}(x)
	= \Nc_d(x; 0, 0.1^2 I_d) \prod_{i=1}^m \frac{1}{1 + \exp(-b^{(i)} \langle a^{(i)}, x \rangle)} ,
\end{equation*}
where $\langle \cdot, \cdot \rangle$ denotes the standard inner product on $\R^d$.

We consider the \textit{CoverType data set} \cite{CoverTypeThesis,CoverTypeURL}, which has cartographic variables as its features and forest cover types as its labels. The raw data set has $m = $ 581,012 instances, $54$ features and $7$ different labels. As one of the labels occurs in about the same number of instances (283,301) as all the other labels combined (297,711), we make the regression problem binary by mapping this label to $+1$ and all other labels to $-1$. As the data set is quite large and we want to keep all of our experiments easily reproducible (in particular avoiding prohibitively large execution times), we use only $10\%$ of it as training data and the remaining $90\%$ as test data. Moreover, we normalized the data (i.e.~each feature was affine-linearly transformed to have mean zero and standard deviation one) and added a constant feature (to enable a non-zero intercept of the regression line), rendering the problem $d=55$ dimensional.

We then ran the usual three slice samplers and NUTS for $N = 5000$ iterations each, initializing them equally with a draw $x_0$ from the prior distribution $\Nc_d(0, 0.1^2 I_d)$. Viewing the first $N_b = 5000$ iterations as burn-in, we computed sample means for each sampler based only on the latter $N_a = 5000$ iterations and then used these sample means as predictors to compute training and testing accuracies. For reference, we also solved the problem\footnote{Note that sklearn technically solved a slightly different problem, because it only performed maximum likelihood without considering the prior. However, this did not appear to have a significant effect in practice, see Table \ref{Tab:LogReg_accs}.} with the default solver of the classical python machine learning package \texttt{scikit-learn} (typically abbreviated sklearn).

\begin{table}[h]
	\caption{Training and testing accuracies of the different solvers in the Bayesian logistic regression experiment.}
	\label{Tab:LogReg_accs}
	\vskip 0.15in
	\begin{center}
		\begin{small}
			\begin{sc}
				\begin{tabular}{lcr}
					\toprule
					Solver & Train Acc & Test Acc \\
					\midrule
					GPSS & 0.75520 & 0.75571 \\
					HRUSS & 0.75472 & 0.75583 \\
					ESS & 0.75506 & 0.75585 \\
					NUTS & 0.75510 & 0.75573 \\
					sklearn & 0.75562 & 0.75580 \\
					\bottomrule
				\end{tabular}
			\end{sc}
		\end{small}
	\end{center}
	\vskip -0.1in
\end{table}

As can be seen in Table \ref{Tab:LogReg_accs}, all five methods solved the logistic regression problem equally well, achieving virtually identical accuracies on both training and test data. Nevertheless, some differences between the samplers' behaviors become evident when considering the usual sampling metrics, see Table \ref{Tab:LogReg_stats}: On the one hand, GPSS used as many target density evaluations as HRUSS and ESS combined, on the other hand it also achieved more than twice the mean step size of either method (which should in principle help GPSS explore the target distribution's mode more thoroughly, though evidently this did not lead to it finding a better predictor than the other samplers, cf.~Table \ref{Tab:LogReg_accs}). In terms of IAT, the three slice samplers perform relatively similarly\footnote{Note that the IAT values varied significantly between different runs of this experiment.}. Notably, NUTS, for which target density evaluations are neither available nor a sensible metric (due to the sampler's use of gradient information), vastly outperformed all three slice samplers in terms of both IAT and mean step size.

\begin{table}[h]
	\caption{Target density evaluations per iteration (TDE/I), integrated autocorrelation time (IAT) and mean step size (MSS) for the Bayesian logistic regression experiment. The IATs were computed w.r.t.~the marginal samples consisting of the second entry of each sample. Whereas the TDE/I were computed based on the entire sampling runs, both IAT and MSS are only based on the samples generated after the burn-in period.}
	\label{Tab:LogReg_stats}
	\vskip 0.15in
	\begin{center}
		\begin{small}
			\begin{sc}
				\begin{tabular}{lccr}
					\toprule
					Sampler & TDE/I & IAT & MSS \\
					\midrule
					GPSS & 23.55 & 311.37 & 0.0268 \\
					HRUSS & 9.12 & 179.30 & 0.0125 \\
					ESS & 10.81 & 116.17 & 0.0115 \\
					NUTS & -- & 1.18 & 0.2942 \\
					\bottomrule
				\end{tabular}
			\end{sc}
		\end{small}
	\end{center}
	\vskip -0.1in
\end{table}

Aside from classification accuracies, another important aspect to consider when using samplers for logistic regression is how long they need to reach the target distribution's mode from wherever they are initialized. For this we refer to Figures \ref{Fig:LogReg_devs_from_skl} and \ref{Fig:LogReg_radii}, where it can be seen that, at least in this case, GPSS and HRUSS converged towards the mode about equally fast, with the convergence of ESS seemingly being slightly slower. NUTS again worked vastly better than all slice samplers, converging almost instantaneously. This may seem remarkable at first glance, but is to be expected when considering that NUTS -- unlike the slice samplers -- relies on gradient information, which is extremely valuable in solving logistic regression tasks.

\begin{figure}[tb]
	\begin{center}
		\includegraphics[width=0.5\textwidth]{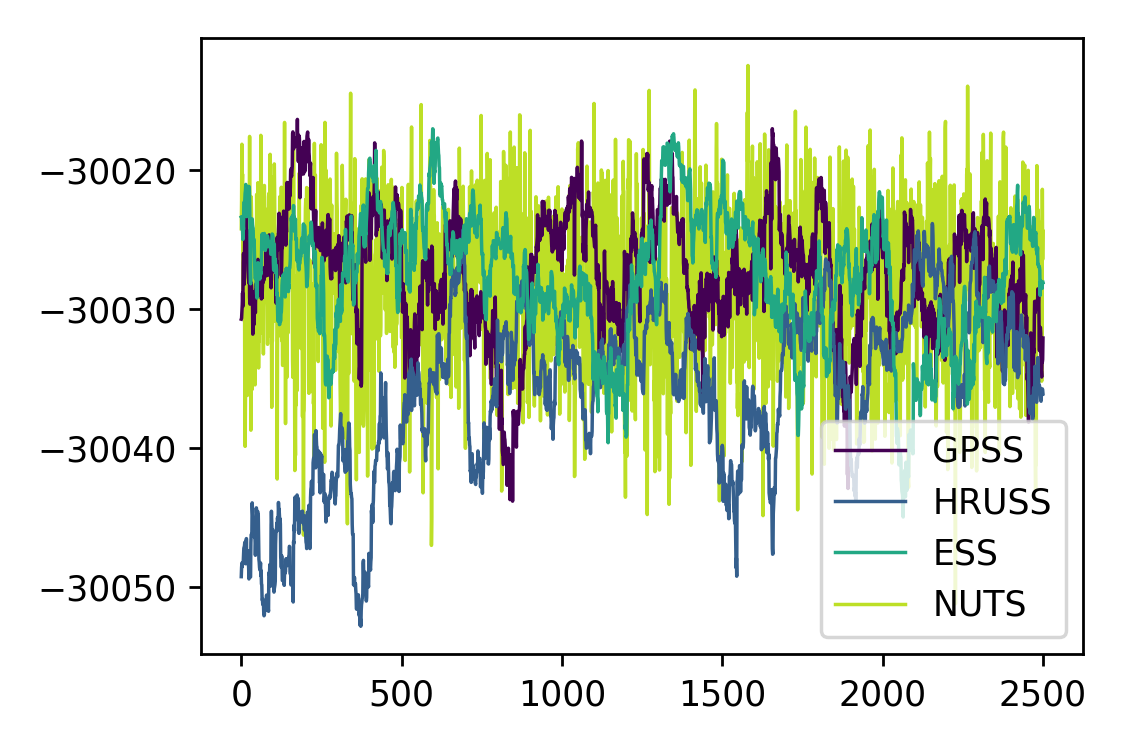}
	\end{center}
	\caption{Log target density values for the Bayesian logistic regression experiment, over the course of each sampler's $N_a = 2500$ iterations after the burn-in period. \label{Fig:LogReg_devs_from_skl}}
\end{figure}

\begin{figure}[tb]
	\begin{center}
		\includegraphics[width=0.5\textwidth]{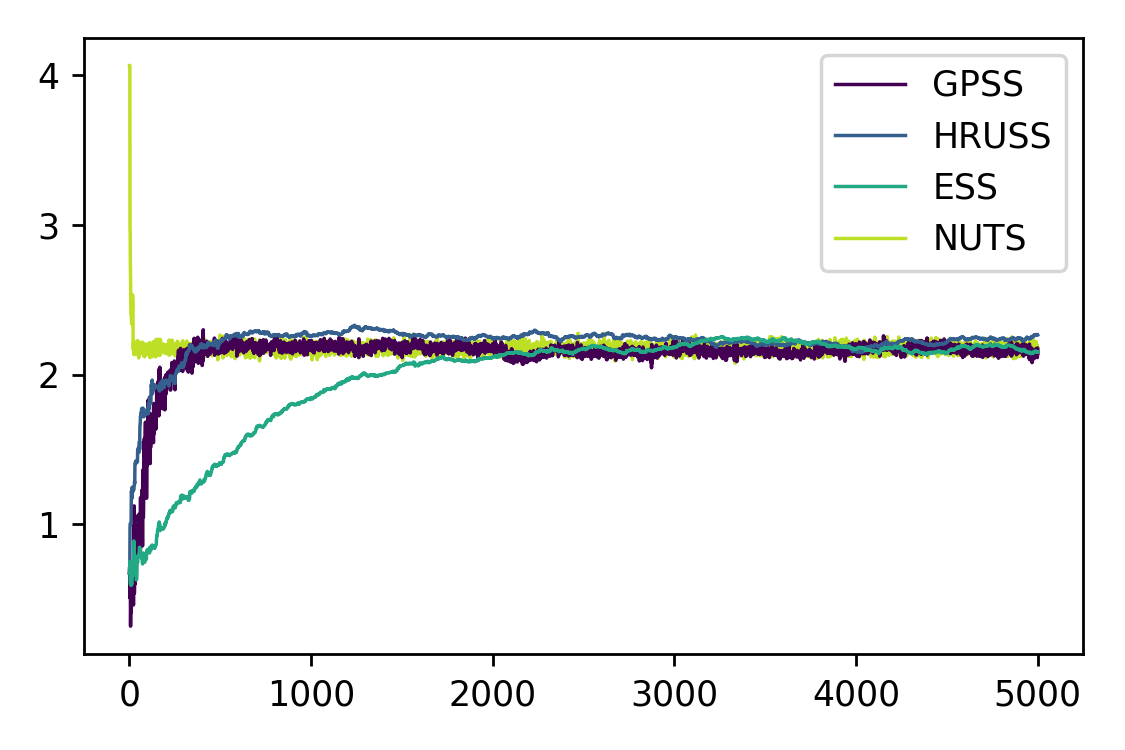}
	\end{center}
	\caption{Sample radii (Euclidean norms) for the Bayesian logistic regression experiment over the course of each sampler's $N = 5000$ iterations. \label{Fig:LogReg_radii}}
\end{figure}

Overall, the results do not suggest Bayesian logistic regression to be a particularly worthwhile application of GPSS, which is not surprising to us based on our observations about the method's apparent strengths and weaknesses (cf.~Sections \ref{Sec:Experiments} and \ref{Sec:Discussion}).

\onecolumn
\section{Additional Results of Numerical Experiments} \label{Sec:ExpRes}
\begin{figure}[H]
	\begin{center}
		\includegraphics[width=\textwidth]{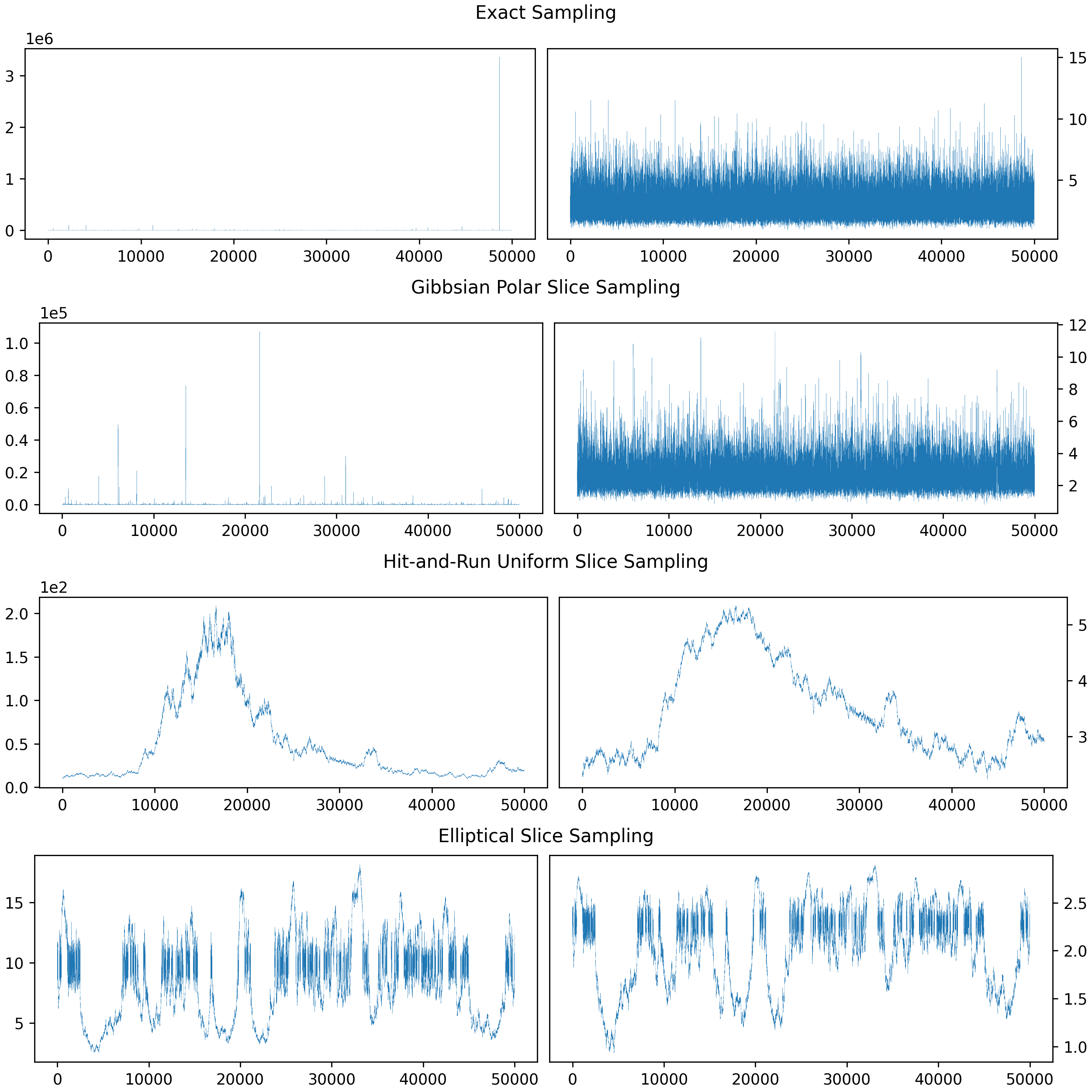}
	\end{center}
	\caption{Sample runs for the multivariate standard Cauchy distribution, i.e.~the density $\varrho_{\nu}(x) = (1 + \norm{x}^2)^{-(d+1)/2}$, in dimension $d=100$. The plots in the left column display the progression of the radii (Euclidean norms) of the individual samples over the course of $N = 5 \cdot 10^4$ iterations. Their counterparts on the right show the logarithms of these values. Note that we include the log radii plots to provide more insight into the short-term behavior of exact sampling and the GPSS chain, which can not be ascertained from the radii plots due to the magnitude of their respective outliers. \label{Fig:std_Cauchy}}
	% NOTE: when running this experiment, most of the time the largest radius outliers of exact sampling and GPSS are on the same scale, just with different coefficients, but sometimes GPSS can look much worse (just by bad luck)
\end{figure}

\begin{figure}[tb]
	\begin{center}
		\includegraphics[width=\textwidth]{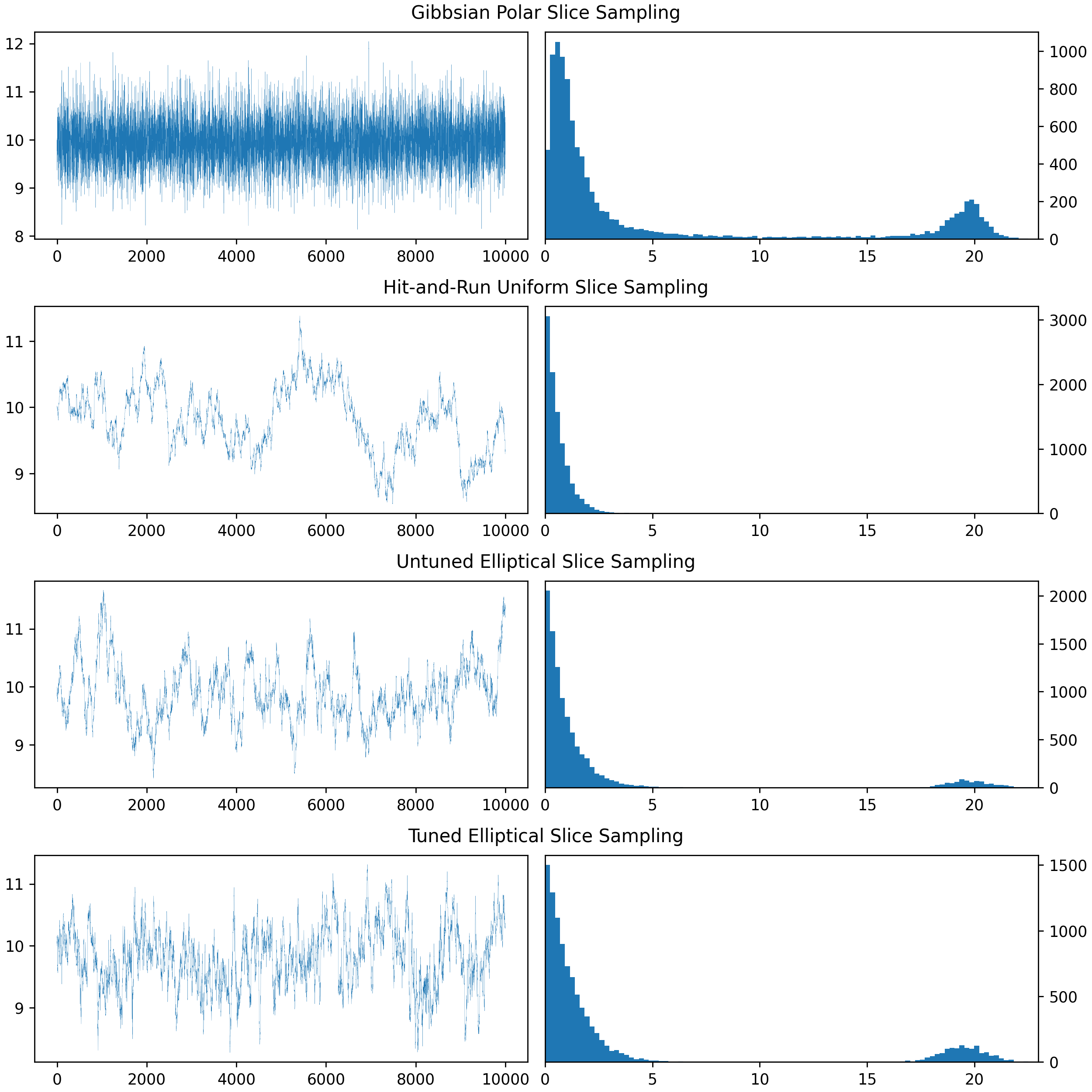}
	\end{center}
	\caption{Sample runs and step size histograms for the hyperplane disk density \eqref{Eq:hyperplane} in dimension $d=200$. The plots in the left column display the progression of the radii (Euclidean norms) of the individual samples over the course of $N = 10^4$ iterations. The plots in the right column show histograms of the Euclidean distances between each two consecutive samples. The descriptor \textit{tuned ESS} refers to ESS where the artificial Gaussian prior is given the empirical covariance of the samples generated by GPSS as its covariance parameter $\Sigma$. Untuned ESS corresponds, as usual, to $\Sigma := I_d$. \label{Fig:hyperplane}}
\end{figure}

\begin{figure}[tb]
	\begin{center}
		\includegraphics[width=\textwidth]{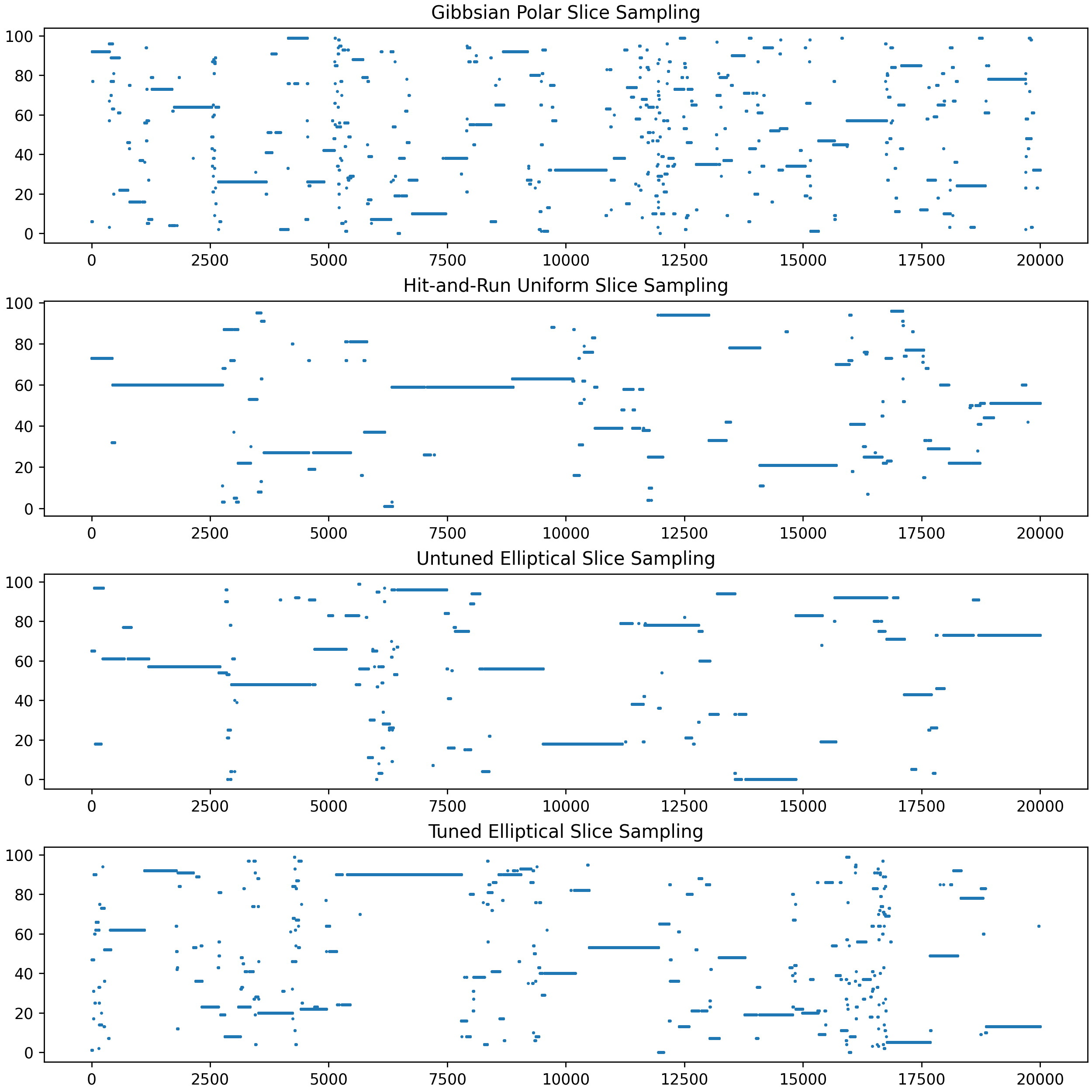}
	\end{center}
	\caption{Progression of currently visited mode pair over the last $N_{\text{window}} = 2 \cdot 10^4$ iterations for the axial modes target density \eqref{Eq:axial_modes_target} in dimension $d=100$. Here the covariance used by untuned ESS was $\Sigma = I_d$ and that used by tuned ESS $\Sigma = (5+d/10)^2 / d \cdot I_d$. \label{Fig:axial_modes_fixed_dim}}
\end{figure}

\begin{figure}[tb]
	\begin{center}
		\includegraphics[width=\textwidth]{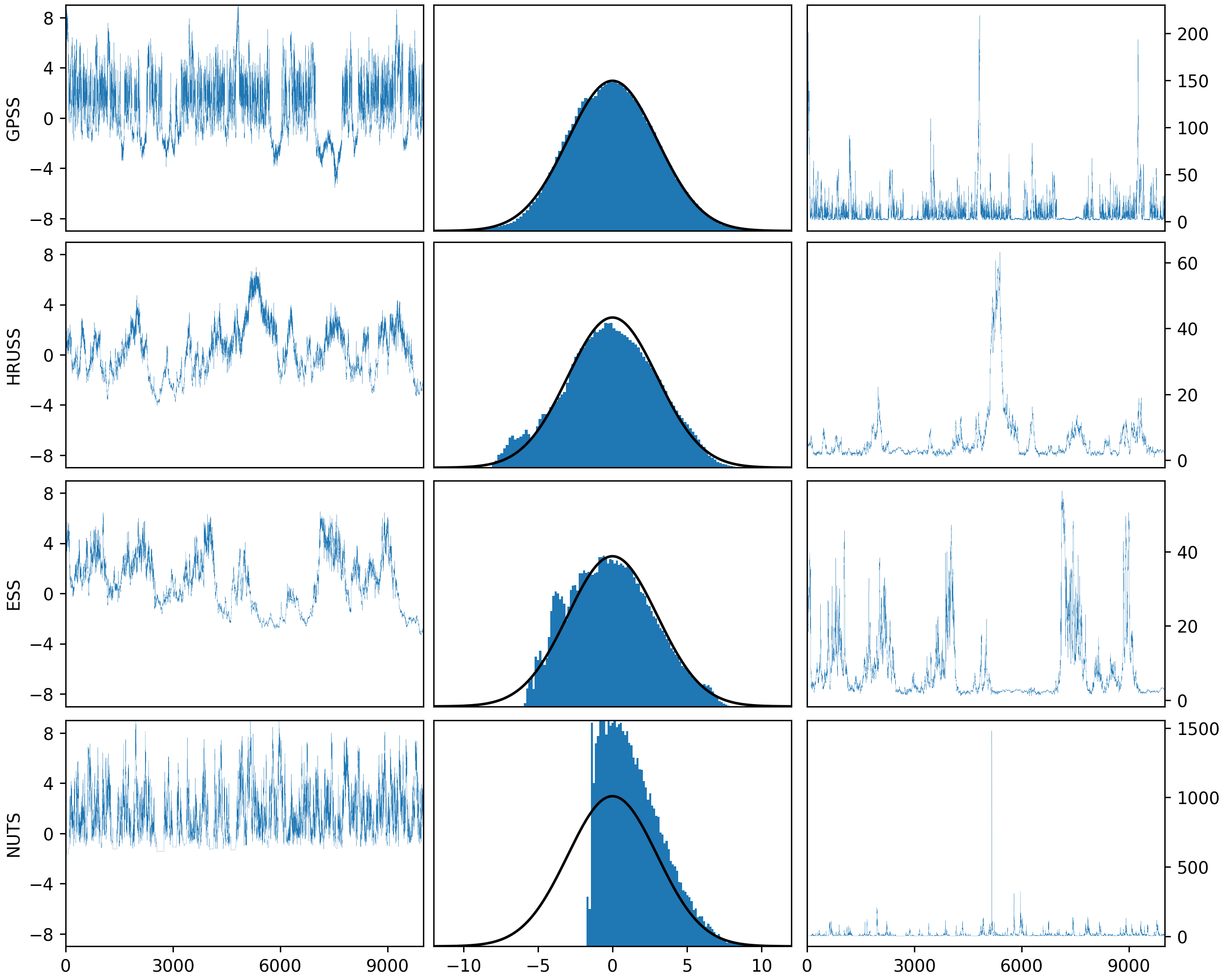}
	\end{center}
	\caption{Marginal traces, marginal histograms and sample radii for Neal's funnel \eqref{Eq:funnel} in dimension $d=10$. The plots in the left column show the progression of the first coordinate component of each sample over the course of each sampler's final $N_{\text{window}} = 10^4$ iterations. The plots in the middle column display histograms of the first coordinate component of all samples each sampler generated within the awarded time budget, with the thick black line marking the target marginal distribution. The plots in the right column show the progression of sample radii (Euclidean norms) over the course of each sampler's final $N_{\text{window}} = 10^4$ iterations. In particular, the quantities displayed in the left and right column of each row are derived from the same $N_{\text{window}}$ samples. \label{Fig:funnel_qualitative}}
\end{figure}
	
\end{document}